\renewcommand{\baselinestretch}{2}
\newtheorem{algorithm}{Algorithm}
\newtheorem{lemma}{Lemma}
 \newtheorem{theorem}{Theorem}
 \newtheorem{proposition}{Proposition}
\newcommand{\pr}{\mathrm{Pr}}
\def\T{{ \mathrm{\scriptscriptstyle T} }}
\newcommand{\bSigma}{{\bm \Sigma}}
\newcommand{\bX}{\mathbf{X}}
\newcommand{\bY}{\mathbf{Y}}
\newcommand{\bx}{\mathbf{x}}
\newcommand{\bbeta}{{\bm\beta}}
\newcommand{\bA}{\mathbf{A}}
\newcommand{\bB}{\mathbf{B}}
\newcommand{\bI}{\mathbf{I}}
\newcommand{\bu}{\mathbf{u}}
\newcommand{\bt}{\mathbf{t}}
\newcommand{\by}{\mathbf{y}}
\newcommand{\bdelta}{{\bm \delta}}
\newcommand{\balpha}{{\bm \alpha}}
\newcommand{\btheta}{{\bm \theta}}
\newcommand{\bmeta}{{\bm \eta}}
\newcommand{\bOmega}{{\bm \Omega}}
\newcommand{\bmu}{\bm \mu}
\newcommand{\bPi}{\bm \Pi}
\def\T{{ \mathrm{\scriptscriptstyle T} }}
\def\pr{{ \mathrm{pr} }}
\newcommand{\beq}{\begin{equation}}
\newcommand{\eeq}{\end{equation}}
\newcommand{\beqn}{\begin{equation*}}
\newcommand{\eeqn}{\end{equation*}}
\newcommand{\bea}{\begin{eqnarray}}
\newcommand{\eea}{\end{eqnarray}}
\newcommand{\bean}{\begin{eqnarray*}}
\newcommand{\eean}{\end{eqnarray*}}
\newcommand{\vinf}[1]{\Vert #1 \Vert_{\infty}}
\newcommand{\vone}[1]{\Vert #1 \Vert_{1}}
\begin{document}
\title{Multiclass Sparse Discriminant Analysis}
\author{Qing Mai \thanks{Department of Statistics, Florida State University (mai@stat.fsu.edu)},
Yi Yang \thanks{School of Statistics, University of Minnesota (yiyang@umn.edu). Mai and Yang are joint first authors.},
Hui Zou \thanks{Corresponding author, School of Statistics, University of Minnesota (zouxx019@umn.edu)}}

\date{First version: May 30, 2014\\
This version: April 17, 2015}

\maketitle

\begin{abstract}
In recent years many sparse linear discriminant analysis methods have been proposed for high-dimensional classification and variable selection. However, most of these proposals focus on binary classification and they are not directly applicable to multiclass classification problems. There are two sparse discriminant analysis methods that can handle multiclass classification problems, but their theoretical justifications remain unknown. In this paper, we propose a new multiclass sparse discriminant analysis method that estimates all discriminant directions simultaneously. We show that when applied to the binary case our proposal yields a classification direction that is equivalent to those by two successful binary sparse LDA methods in the literature. An efficient algorithm is developed for computing our method with high-dimensional data. Variable selection consistency and rates of convergence are established under the ultrahigh dimensionality setting. We further demonstrate the superior performance of our proposal over the existing methods on simulated and real data.
\end{abstract}

Keywords:
Discriminant analysis; High dimensional data; Variable selection; Multiclass classification; Rates of convergence.

\section{Introduction}
 In multiclass classification we have a pair of random variables $(Y,\bX)$, where $\bX\in \mathbb{R}^p$ and $Y\in \{1,\ldots, K\}$. We need to predict $Y$ based on   $\bX$. Define  $\pi_k=\Pr(Y=k)$. The linear discriminant analysis model states that
\begin{equation}\label{LDA}
\bX\mid (Y=k)\sim N(\bmu_k,\bSigma), k\in \{1,2,\ldots,K\}.
\end{equation}
 Under \eqref{LDA}, the Bayes rule can be explicitly derived as follows
\beq\label{Bayes}
\hat Y=\arg\max_k \{(\bX-\dfrac{\bmu_k}{2})^\T\bbeta_k+\log{\pi_k} \},
\eeq
where $\bbeta_k=\bSigma^{-1}\bmu_k$ for $k=1,\ldots, K$.
Linear discriminant analysis has been observed to perform very well on many low-dimensional datasets \citep{STATLOG,Hand_2006}. However, it may not be suitable for high-dimensional datasets for at least two reasons. First, it is obvious that linear discriminant analysis cannot be applied if the dimension $p$ exceeds the sample size $n$, because the sample covariance matrix will be singular. Second, \cite{FF2008} showed that even if the true covariance matrix is an identity matrix and we know this fact, a classifier involving all the predictors will be no better than random guessing.

In recent years, many high-dimensional generalizations of linear discriminant analysis have been proposed  \citep{Tibshirani_shrunken,TJ,CLT_2008,FF2008,Wu08,Shao,CL2011,WT_2011,Slda,ROAD}. In the binary case, the discriminant direction is $\bbeta=\bSigma^{-1}(\bmu_2-\bmu_1)$. One can seek sparse estimates of $\bbeta$  to generalize linear discriminant analysis to deal with high dimensional classification. Indeed, this is the common feature of three popular sparse discriminant analysis methods: the linear programming discriminant \citep{CL2011}, the regularized optimal affine discriminant \citep{ROAD} and the direct sparse discriminant analysis \citep{Slda}. The linear programming discriminant finds a sparse estimate by the Dantzig selector \citep{Dantzig}; the regularized optimal affine discriminant \citep{ROAD} adds the lasso penalty \citep{Tibs96a} to Fisher's discriminant analysis; and  the direct sparse discriminant analysis \citep{Slda} derives the sparse discriminant direction via a sparse penalized least squares formulation.  The three methods can detect the important predictors and consistently estimate the classification rule with overwhelming probabilities with the presence of ultrahigh dimensions. However, they are explicitly designed for binary classification and do not handle the multiclass case naturally.

Two popular multiclass sparse discriminant analysis proposals are the $\ell_1$ penalized Fisher's discriminant \citep{WT_2011} and sparse optimal scoring \cite{CLT_2008}. However, these two methods do not have theoretical justifications. It is generally unknown whether they can select the true variables with high probabilities, how close their estimated discriminant directions are to the true directions, and whether the final classifier will work similarly as the Bayes rule. 

Therefore, it is desirable to have a new multiclass sparse discriminant analysis algorithm that is conceptually intuitive, computationally efficient and theoretically sound. To this end, we propose a new sparse discriminant method for high-dimensional multiclass problems. We show that our proposal not only has competitive empirical performance but also enjoys strong theoretical properties under ultrahigh dimensionality.  In Section 2 we introduce the details of our proposal after briefly reviewing the existing two proposals. We also develop an efficient algorithm for our method. Theoretical results are given in Section 3. In Section 4 we use simulations and a real data example to demonstrate the superior performance of our method over sparse optimal scoring \citep{CLT_2008} and $\ell_1$ penalized Fisher's discriminant \citep{WT_2011}. Technical proofs are in an Appendix.

\section{Method}

\subsection{Existing proposals}

The Bayes rule under a linear discriminant analysis model is
\beqn\label{Bayes}
\hat Y=\arg\max_k \{(\bX-\dfrac{\bmu_k}{2})^\T\bbeta_k+\log{\pi_k} \},
\eeqn
where $\bbeta_k=\bSigma^{-1}\bmu_k$ for $k=1,\ldots, K$. Let $\btheta_k^{\mathrm{Bayes}}=\bbeta_{k}-\bbeta_{1}$ for $k=1,\ldots,K$. Then the Bayes rule can be written as
\beq\label{Bayes2}
\hat Y=\arg\max_k\{(\btheta_k^{\mathrm{Bayes}})^\T(\bX-\dfrac{\bmu_k}{2})+\log{\pi_k}\}.
\eeq
We refer to the directions $\btheta^{\mathrm{Bayes}}=(\btheta_2^{\mathrm{Bayes}},\ldots,\btheta_K^{\mathrm{Bayes}})\in\mathbb{R}^{p\times(K-1)}$ as the discriminant directions.

We briefly review two existing multiclass sparse discriminant methods: the sparse optimal scoring \citep{CLT_2008} and the $\ell_1$ penalized Fisher's discriminant \citep{WT_2011}.
Instead of estimating $\btheta^{\mathrm{Bayes}}$ directly, these two methods estimate a set of directions $\bmeta=(\bmeta_1,\ldots,\bmeta_{K-1})\in\mathbb{R}^{p\times(K-1)}$ such that $\bmeta$ spans the same linear subspace as $\btheta^{\mathrm{Bayes}}$ and hence linear discriminant analysis on $\bX^\T\bmeta$ will be equivalent to \eqref{Bayes2} on the population level. More specifically, 
these two methods look for estimates of $\bmeta=(\bmeta_1,\ldots,\bmeta_{K-1})$ in Fisher's discriminant analysis:
\begin{equation}\label{fisher}
\bmeta_k=\arg\max \bmeta_k^\T\bSigma_b\bmeta_k, \mbox{s.t. $\bmeta_k^\T\bSigma\bmeta_k=1, \bmeta_k^\T\bSigma\bmeta_l=0$ for $l<k$,}
\end{equation}
where $\bSigma_b=\frac{1}{K-1}\sum_{k=1}^{K}(\bmu_k-\bar\bmu)(\bmu_k-\bar\bmu)^\T$ with $\bar{\bmu}=\frac{1}{K}\sum_{k}\bmu_k$.

 With a little abuse of terminology, we refer to $\bmeta$ as discriminant directions as well. To find $\bmeta$, define $\bY^{\mathrm{dm}}$ as an $n\times K$ matrix of dummy variables with $Y^{\mathrm{dm}}_{ik}=\mathrm{1}(Y_i=k)$.

In addition to the discriminant direction $\bmeta_k$, sparse optimal scoring creates $K-1$ vectors of scores $\balpha_1,\ldots, \balpha_{K-1}\in \mathbb{R}^{K}$. Then for $k=1,\ldots,K-1$, sparse optimal scoring estimates $\bmeta_k$ sequentially. In each step, sparse optimal scoring finds $\hat\balpha_k, \hat\bmeta^{\mathrm{SOS}}_k$. Suppose the first $k-1$ score vectors $\hat\balpha_{l},\ l<k$ and discriminant directions $\hat\bmeta_{l}^{\mathrm{SOS}},\ l<k$ are available. Then sparse optimal scoring finds $\hat\balpha_k,\hat\bmeta_k^{\mathrm{SOS}}$ by solving the following problem:
\begin{eqnarray}\label{sb9.7.eq2}
(\hat\balpha_k,\hat\bmeta_k^{\mathrm{SOS}})&=&\arg\min_{\balpha_k,\bmeta_k}\sum_{i=1}^n(\mathbf{Y}^{\mathrm{dm}}\balpha_k-\tilde{\mathbf{X}}\bmeta_k)^2+\lambda\Vert\bmeta_k\Vert_1\\
&&\mbox{ s.t. } \dfrac{1}{n}\balpha_k^\T (\mathbf{Y}^{\mathrm{dm}})^\T
\mathbf{Y}^{\mathrm{dm}}\balpha_k=1, \balpha_k^\T (\mathbf{Y}^{\mathrm{dm}})^\T \mathbf{Y}^{\mathrm{dm}}\hat\balpha_l=0, \mbox{ for any }l<k,\nonumber
\end{eqnarray}
where $\tilde\bX$ is the centered data matrix, and $\lambda$ is a tuning parameter. The sparse optimal scoring is closely related to \eqref{fisher}, because when the dimension is low, the unpenalized version of \eqref{sb9.7.eq2} gives the same directions (up to a scalar) as \eqref{fisher} with the parameters $\bSigma_b$ and $\bSigma$ substituted with the sample estimates. Therefore, with the $\ell_1$ penalty, sparse optimal scoring gives sparse approximations to $\bmeta$.

Note that the constraint $\balpha_k^\T (\mathbf{Y}^{\mathrm{dm}})^\T \mathbf{Y}^{\mathrm{dm}}\balpha_l=0, l<k$ indicates that, $(\hat\balpha_k,\hat\bmeta_k^{\mathrm{SOS}})$ depends on the knowledge of $(\hat\balpha_l,\hat\bmeta_l^{\mathrm{SOS}}), l<k$. This is why we say that the sparse optimal scoring adopts a sequential approach to estimate the discriminant directions.

The $\ell_1$ penalized Fisher's discriminant analysis estimates $\bmeta_k$ by
\begin{equation*}\label{Fisher}
\hat\bmeta_k=\arg\max_{\bmeta_k}\bmeta_k^\T\hat\bSigma^k_b\bmeta_k+\lambda_k\sum_{j}|\hat\sigma_j\eta_{kj}| \mbox{
s.t. } \bmeta_k^\T\tilde\bSigma\bmeta_k\le 1,
\end{equation*}
for $k=1,\ldots, K-1$, where $\lambda_k$ are tuning parameters, $\hat\sigma_j^2$ is the $(j,j)$th element of the sample estimate of $\bSigma$, $\tilde\bSigma$ is a positive definite estimate of $\bSigma$,
\begin{equation}\label{sb9.5.eq2}
\hat\bSigma_b^k=\mathbf{X}^\T \mathbf{Y}^{\mathrm{dm}}((\mathbf{Y}^{\mathrm{dm}})^\T\mathbf{
Y}^{\mathrm{dm}})^{-1/2}\bOmega_k((\mathbf{Y}^{\mathrm{dm}})^\T \mathbf{Y}^{\mathrm{dm}})^{-1/2}(\mathbf{Y}^{\mathrm{dm}})^\T
\mathbf{X}
\end{equation}
and $\bOmega_k$ is the identity matrix if $k=1$ and otherwise an orthogonal projection matrix with column space orthogonal to $((\bY^{\mathrm{dm}})^\T \bY)^{-1/2}\bY^\T \bX\hat\bmeta_l$ for
all $l<k$. Again, if the dimension is low, then unpenalized version of \eqref{sb9.5.eq2} is equivalent to \eqref{fisher} with the parameters replaced by the sample estimates. Since $\bOmega_k$ relies on $\hat \bmeta_l$ for all $l<k$, the $\ell_1$ penalized Fisher's discriminant analysis also finds the discriminant directions sequentially.

\subsection{Our proposal}

Good empirical results have been reported for supporting the $\ell_1$ penalized Fisher's discriminant analysis and the sparse optimal scoring.
However, it is unknown whether either of these two classifiers is consistent when more than two classes are present. Moreover,
both sparse optimal scoring and $\ell_1$ penalized Fisher's discriminant analysis estimate the discriminant directions sequentially.
We believe a better multiclass sparse discriminant analysis algorithm should be able to estimate all discriminant directions simultaneously, just like the classical linear discriminant analysis. We aim to develop a new computationally efficient multiclass sparse discriminant analysis method that enjoy strong theoretical properties under ultrahigh dimensionality. Such a method can be viewed as a natural multiclass counterpart of the three binary sparse discriminant methods in \cite{Slda}, \cite{CL2011} and \cite{ROAD}.

To motivate our method, we first discuss the implication of sparsity in the multiclass problem. Note that, by \eqref{Bayes2}, the contribution from the $j$th variable ($X_j$) will vanish if and only if
\beq\label{condition2}
\theta^{\mathrm{Bayes}}_{2j}=\cdots=\theta^{\mathrm{Bayes}}_{Kj}=0
\eeq
Let ${\cal D}=\{j: \textrm{condition}~(\ref{condition2}) \ \textrm{does not hold} \}$. Note that whether an index $j$ belongs to $\cal D$ depends on $\theta_{kj}$ for all $k$. This is because $\theta_{kj}^{\mathrm{Bayes}}, k=2,\ldots,K$ are related to each other, as they are coefficients for the same predictor. In other words, $\theta_{kj}^{\mathrm{Bayes}}, k=2,\ldots, K$ are naturally grouped according to $j$. Then the sparsity assumption states that $|{\cal D}| \ll p$, which is referred to as the common sparsity structure.

Our proposal begins with a convex optimization formulation of the Bayes rule of the multiclass linear discriminant analysis model.
Recall that $\btheta_k^{\mathrm{Bayes}}=\bSigma^{-1}(\bmu_k-\bmu_1)$ for $k=2,\ldots,K$.  On the population level, we have
\begin{eqnarray} \label{convex1}
(\btheta_2^{\mathrm{Bayes}},\ldots,\btheta_K^{\mathrm{Bayes}})=\arg\min_{\btheta_2,\ldots,\btheta_K} \sum^{K}_{k=2}\{\frac{1}{2}\btheta^\T_k \bSigma \btheta_k-( \bmu_k- \bmu_1)^{\T} \btheta_k\}.
\end{eqnarray}
In the classical low-dimension-large-sample-size setting, we can estimate $(\btheta_2^{\mathrm{Bayes}},\ldots,\btheta_K^{\mathrm{Bayes}})$ via an empirical version of (\ref{convex1})
\begin{eqnarray}\label{LDA1}
(\hat\btheta_2,\ldots,\hat\btheta_K)=\arg\min_{\btheta_2,\ldots,\btheta_K} \sum^{K}_{k=2}\{\frac{1}{2}\btheta^\T_k \hat\bSigma \btheta_k-(\hat\bmu_k-\hat\bmu_1)^{\T} \btheta_k\},
\end{eqnarray}
where $\hat\bSigma=\dfrac{1}{n-K}\sum_{k=1}^K\sum_{Y^i=k}(\bX^i-\hat\bmu_k)(\bX^i-\hat\bmu_k)^\T$, $\hat\bmu_k=\dfrac{1}{n_k}\sum_{Y^i=k}\bX^i$ and $n_k$ is the sample size within Class $k$. The solution to (\ref{LDA1}) gives us the classical multiclass linear discriminant classifier.

For presentation purpose, write
$
\btheta_{.j}=(\theta_{2j},\ldots,\theta_{Kj})^{\T}
$
and define
$
\Vert\btheta_{.j} \Vert=(\sum^K_{i=2} \theta_{ij}^2)^{1/2}.
$
For the high-dimensional case, we propose the following penalized formulation for multiclass sparse discriminant analysis.
\begin{eqnarray}\label{estimator2}
(\hat\btheta_2,\ldots,\hat\btheta_K)=\arg\min_{\btheta_2,\ldots,\btheta_K} \sum^{K}_{k=2}\{\frac{1}{2}\btheta^\T_k \hat \bSigma \btheta_k-(\hat \bmu_k-\hat \bmu_1)^{\T} \btheta_k\}+\lambda \sum_{j=1}^p\Vert\btheta_{\cdot j}\Vert,
\end{eqnarray}
where $\lambda$ is a tuning parameter. It is clear that  (\ref{estimator2}) is based on  (\ref{LDA1}).
 In (\ref{estimator2}) we have used the group lasso \citep{groupedlasso} to encourage the common sparsity structure.
Let $\hat{\cal D}=\{j: \hat\theta_{kj} \ne 0\}$ which denotes the set of selected variables for the multiclass classification problem.
We will show later that with a high probability $\hat {\cal D}$ equals ${\cal D}$.
One can also use a group version of a nonconvex penalty \citep{FL_2001} or an adaptive group lasso penalty \citep{bach08} to replace the group lasso penalty in \eqref{estimator2}. To fix the main idea, we do not pursue this direction here.

After obtaining $\hat \btheta_k,k=2,\ldots, K$, we fit the classical multiclass linear discriminant analysis on $(\bX^\T\hat \btheta_2,\ldots,\bX^\T\hat\btheta_{K})$, as in sparse optimal scoring and $\ell_1$ penalized Fisher's discriminant analysis. We repeat the procedure for a sequence of $\lambda$ values and pick the one with the smallest cross-validation error rate.

We would like to make a remark here that our proposal is derived from a different angle than sparse optimal scoring and $\ell_1$ penalized Fisher's discriminant analysis. Both sparse optimal scoring and $\ell_1$ penalized Fisher's discriminant analysis penalize a formulation related to Fisher's discriminant analysis in \eqref{fisher}, while our method directly estimates the Bayes rule. This different angle leads to considerable convenience in both computation and theoretical studies. Yet we can easily recover the directions defined by Fisher's discriminant analysis after applying our method. See Section A.1 for details.

\subsection{Connections with existing binary sparse LDA methods}
Although our proposal is primarily motivated by the multiclass classification problem, it can be directly applied to the binary classification problem as well by simply letting $K=2$ in the formulation \eqref{estimator2}. It turns out that the binary special case of our proposal has very intimate connections with some proven successful binary sparse LDA methods in the literature. We elaborate more on this point in what follows. 

When $K=2$, \eqref{estimator2} reduces to
\begin{equation}
\hat\btheta^{\mathrm{MSDA}}(\lambda)=\arg\min_{\btheta} \frac{1}{2}\btheta^\T \hat \bSigma \btheta-(\hat \bmu_2-\hat \bmu_1)^{\T} \btheta+\lambda \Vert\btheta\Vert_1\label{MSDA11}
\end{equation}
Considering the Dantzig selector formulation of (\ref{MSDA11}), we have the following constrained $\ell_1$ minimization estimator defined as
\begin{equation}
\hat\btheta=\arg\min_{\btheta} \Vert\btheta\Vert_1 \mbox{ s.t. $\Vert\hat\bSigma\btheta-(\hat\bmu_2-\hat\bmu_1)\Vert_{\infty}\le\lambda$}.
\end{equation}
The above estimator is exactly the linear programming discriminant (LPD) \cite{CL2011}. 
 
Moreover, we compare (\ref{MSDA11}) with another two well-known sparse discriminant analysis proposals for binary classification: the regularized optimal affine discriminant (ROAD)\citep{ROAD} and the direct sparse discriminant analysis (DSDA) \citep{Slda}. Denote the estimates of the discriminant directions given by ROAD and DSDA as $\hat\btheta^{\mathrm{ROAD}}$ and $\hat\btheta^{\mathrm{DSDA}}$, respectively. Then we have
\begin{eqnarray}
\hat\btheta^{\mathrm{ROAD}}(\lambda)&=&\arg\min_{\btheta} \btheta^\T\hat \bSigma\btheta+\lambda\Vert\btheta\Vert_1 \mbox{ s.t. $\btheta^\T(\hat\bmu_2-\hat\bmu_1)=1$}\label{ROAD}\\
\hat\btheta^{\mathrm{DSDA}}(\lambda)&=&\arg\min_{\btheta} \sum_i(Y^i-\theta_0-(\bX^i)^\T\btheta)^2+\lambda\Vert\btheta\Vert_1\label{DSDA}
\end{eqnarray}
We derive the following proposition to reveal the connections between our proposal $(K=2)$ and ROAD, DSDA. Note that the proofs of this proposition and all the subsequent lemmas and theorems can be found in the appendix.

\begin{proposition}\label{equiv}
Define $c_0(\lambda)=\hat\btheta^{\mathrm{MSDA}}(\lambda)^\T(\hat\bmu_2-\hat\bmu_1), c_1(\lambda)=\hat\btheta^{\mathrm{DSDA}}(\lambda)^\T(\hat\bmu_2-\hat\bmu_1)$ and $a=\frac{n|c_1(\lambda)|}{|c_0(\lambda)|}$. Then we have
\begin{eqnarray}
\hat\btheta^{\mathrm{MSDA}}(\lambda)&=&c_0(\lambda)\hat\btheta^{\mathrm{ROAD}}(\lambda/|c_0(\lambda)|),\label{prop1:eq1}\\
\hat\btheta^{\mathrm{MSDA}}(\lambda)&=&\dfrac{c_0(\lambda)}{c_1(a\lambda)}\hat\btheta^{\mathrm{DSDA}}(a\lambda).\label{prop1:eq2}
\end{eqnarray}
\end{proposition}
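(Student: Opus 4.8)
The plan is to reduce all three estimators to a common one-parameter family of problems and to match them through their first-order (subgradient) optimality conditions, which is cleaner than arguing about uniqueness of the (possibly non-unique, since $p>n$) minimizers. Throughout I would write $\bm{d}=\hat\bmu_2-\hat\bmu_1$ and observe that on any slice $\{\btheta:\btheta^\T\bm{d}=t\}$ the linear term $-\btheta^\T\bm{d}$ in (\ref{MSDA11}) is the constant $-t$; hence $\hat\btheta^{\mathrm{MSDA}}(\lambda)$, which by definition lies on the slice $t=c_0(\lambda)$, is a minimizer of the \emph{pure} problem $\tfrac12\btheta^\T\hat\bSigma\btheta+\lambda\Vert\btheta\Vert_1$ subject to $\btheta^\T\bm{d}=c_0(\lambda)$. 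The same reduction applies to ROAD (which already has this constrained form) and, after an algebraic preprocessing step, to DSDA.

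For (\ref{prop1:eq1}) I would write the stationarity condition for (\ref{MSDA11}),
\[
\hat\bSigma\,\hat\btheta^{\mathrm{MSDA}}(\lambda)-\bm{d}+\lambda\,\bs=0,\qquad \bs\in\partial\Vert\hat\btheta^{\mathrm{MSDA}}(\lambda)\Vert_1,
\]
and the KKT condition for ROAD at penalty $\mu$ with Lagrange multiplier $\gamma$ for the constraint $\btheta^\T\bm{d}=1$. Substituting $\btheta=c_0(\lambda)\,\hat\btheta^{\mathrm{ROAD}}(\mu)$ and using positive homogeneity of the $\ell_1$ subdifferential (so that a subgradient at $\hat\btheta^{\mathrm{ROAD}}$ is, up to the sign of $c_0$, a subgradient at the scaled point), the ROAD condition is transformed into the MSDA condition provided the constraint is rescaled by $c_0(\lambda)$ — which is exactly forced by the definition $c_0(\lambda)=\hat\btheta^{\mathrm{MSDA}}(\lambda)^\T\bm{d}$ — and the penalty is matched by equating the coefficients of $\bs$. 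This coefficient-matching, with $|c_0(\lambda)|$ absorbing the sign, pins the ROAD penalty to the value stated in (\ref{prop1:eq1}); convexity then upgrades ``shares a stationary point'' to equality of minimizers.

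For (\ref{prop1:eq2}) the extra work is to bring DSDA into the same form. After profiling out $\theta_0$ in (\ref{DSDA}) (its optimum is $\hat\theta_0=\bar Y-\bar\bX^\T\btheta$), the residual sum of squares expands into a constant, a linear term in the cross moment $\sum_i(\bX^i-\bar\bX)(Y^i-\bar Y)$, and the quadratic form $\btheta^\T\bm{W}\btheta$ with $\bm{W}=\sum_i(\bX^i-\bar\bX)(\bX^i-\bar\bX)^\T$ the total scatter. I would then invoke the total-equals-within-plus-between scatter identity: in the binary case the between-class scatter is rank one and proportional to $\bm{d}\bm{d}^\T$, while the cross moment is itself proportional to $\bm{d}$. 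This rewrites the DSDA stationarity condition in terms of $\hat\bSigma\btheta$, a scalar multiple of $\bm{d}$, and the rank-one piece $(\btheta^\T\bm{d})\bm{d}$; evaluated at $\btheta=\hat\btheta^{\mathrm{DSDA}}(a\lambda)$ the last piece collapses to a multiple of $\bm{d}$ through $\btheta^\T\bm{d}=c_1(a\lambda)$. Comparing with the MSDA condition after scaling by $c_0(\lambda)/c_1(a\lambda)$ produces two scalar equations — one from the coefficient of $\bm{d}$, one from the coefficient of $\bs$ — whose solution is the rescaling factor in (\ref{prop1:eq2}) and the penalty relation encoded by $a=n|c_1(\lambda)|/|c_0(\lambda)|$.

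I expect the main obstacle to be the scalar bookkeeping rather than the structure. The precise factor $a$ and the scaling $c_0/c_1$ depend on three conventions that must be tracked exactly: the coding of $Y^i$ in (\ref{DSDA}), the $1/(n-K)$ normalization of $\hat\bSigma$, and the exact constants in the between/within scatter identity (which carry the $n_1n_2/n$ and $n$ factors responsible for the $n$ in $a$). A secondary subtlety is the sign of $c_0$ and $c_1$, handled by the absolute values in the statement and by the behaviour of the $\ell_1$ subdifferential under negative scaling; and the mildly self-referential appearance of $c_1(a\lambda)$, which I would resolve by checking that the constraint normalizations on the two sides are automatically consistent once both points satisfy the same stationarity system. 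Because every problem involved is convex, verifying that the scaled point satisfies the target stationarity conditions suffices, so no uniqueness of solutions (which can fail when $p>n$) is needed.
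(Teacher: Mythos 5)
Your proposal is sound, but it takes a genuinely different route from the paper on both claims. For \eqref{prop1:eq1} the paper uses no KKT or subgradient bookkeeping at all: it defines the two objectives $L^{\mathrm{MSDA}}$ and $L^{\mathrm{ROAD}}$, notes that on the slice $(\tilde\btheta')^\T(\hat\bmu_2-\hat\bmu_1)=1$ one has the exact identity $L^{\mathrm{MSDA}}(c_0(\lambda)\tilde\btheta',\lambda)=c_0(\lambda)^2L^{\mathrm{ROAD}}\bigl(\tilde\btheta',\lambda/|c_0(\lambda)|\bigr)-c_0(\lambda)$, and then transfers the global optimality of $\hat\btheta^{\mathrm{MSDA}}(\lambda)$ (restricted to the affine slice through it, which is where the definition of $c_0$ enters) into constrained optimality of $c_0(\lambda)^{-1}\hat\btheta^{\mathrm{MSDA}}(\lambda)$ for ROAD; signs are absorbed automatically since the scaling factor is $c_0^2>0$ and the penalty carries $|c_0|$. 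For \eqref{prop1:eq2} the paper does not re-derive the DSDA--ROAD connection as you plan to (profiling the intercept, total-equals-within-plus-between scatter, rank-one cross moment); it simply cites Theorem 1 of \citep{equivalence}, namely $\hat\btheta^{\mathrm{DSDA}}(\lambda)=c_1(\lambda)\hat\btheta^{\mathrm{ROAD}}\bigl(\lambda/(n|c_1(\lambda)|)\bigr)$, and chains it with \eqref{prop1:eq1}. Your approach buys self-containedness (you would in effect re-prove the cited theorem), and verifying the MSDA stationarity system directly at the scaled DSDA point treats the self-referential penalty $c_1(a\lambda)$ more carefully than the paper's chaining, which implicitly needs $|c_1(a\lambda)|=|c_1(\lambda)|$ for the substitution to come out exactly as stated; the paper's route buys brevity and outsources precisely the scalar bookkeeping (the $Y$-coding, the $1/(n-K)$ normalization, the scatter-identity constants producing the factor $n$ in $a$) that you correctly flag as the main hazard. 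One detail your coefficient matching will surface, and which the paper's own computation also glosses over: with ROAD defined without a $\tfrac12$ on the quadratic form, exact matching yields the ROAD penalty $2\lambda/|c_0(\lambda)|$ rather than $\lambda/|c_0(\lambda)|$; this constant reparametrization is immaterial to the path-equivalence claim but should be tracked if you carry out the computation in full.
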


Proposition 1 shows that the classification direction by our proposal is identical to a classification direction by ROAD and a classification direction by DSDA. Consequently, 
our proposal $(K=2)$ has the same solution path as ROAD and DSDA.

\subsection{Algorithm}
Besides their solid theoretical foundation, LPD, ROAD and DSDA all enjoy computational efficiency. In particular, DSDA's computational complexity is the same as fitting a lasso linear regression model. In this section we show that our proposal for the multiclass problem can be solved by a very efficient algorithm. In light of this and Proposition 1, our proposal is regarded as the natural multiclass generalization of these successful binary sparse LDA methods.

We now present the efficient algorithm for solving (\ref{estimator2}).  For convenience write $\hat \bdelta^{k}=\hat \bmu_k-\hat \bmu_1$. Our algorithm is based on the following lemma.
\begin{lemma}\label{alg}
Given $\{\btheta_{.j'}, j' \neq j\}$, the solution of $\btheta_{.j}$ to \eqref{estimator2} is defined as
\begin{eqnarray}\label{bd1}
\arg\min_{\btheta_{.j}} \sum^K_{k=2}\frac{1}{2}(\theta_{kj}-\tilde \theta_{kj})^2+\frac{\lambda}{\hat \sigma_{jj}} \Vert \btheta_{.j}\Vert
\end{eqnarray}
where
$
\tilde \btheta_{k,j}=\frac{\hat \delta^k_{j}-\sum_{ l \neq j }\hat \sigma_{lj} \theta_{kl}}{\hat \sigma_{jj}}.
$
Let $\tilde \btheta_{.j}=(\tilde \theta_{2j},\ldots,\tilde \theta_{Kj})^{\T}$ and
$\Vert \tilde \btheta_{.j} \Vert=( \sum^K_{k=2} \tilde \theta^2_{kj} )^{1/2}$.
The solution to \eqref{bd1} is given by
\begin{eqnarray}\label{bd2}
\hat \btheta_{.j}= \tilde \btheta_{.j} \left(1-\frac{\lambda} {\Vert \tilde \btheta_{.j} \Vert} \right)_{+}.
\end{eqnarray}
\end{lemma}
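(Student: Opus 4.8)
The plan is to establish the two claims separately: first, that block-coordinate minimization of \eqref{estimator2} over $\btheta_{.j}$ with the remaining blocks held fixed is equivalent to the problem \eqref{bd1}; second, that \eqref{bd1} has the closed-form group soft-thresholding solution \eqref{bd2}.

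For the first claim I would fix $\{\btheta_{.j'}:j'\neq j\}$ and isolate the terms of the objective in \eqref{estimator2} that depend on $\theta_{kj}$, $k=2,\ldots,K$. The quadratic part $\sum_k\frac12\btheta_k^\T\hat\bSigma\btheta_k$ decouples across $k$, so for each $k$ I would expand it and, using $\hat\sigma_{lj}=\hat\sigma_{jl}$, gather the $\theta_{kj}$-dependent terms as $\frac12\hat\sigma_{jj}\theta_{kj}^2+\theta_{kj}\bigl(\sum_{l\neq j}\hat\sigma_{lj}\theta_{kl}-\hat\delta^k_j\bigr)$. Completing the square in $\theta_{kj}$ (legitimate because $\hat\sigma_{jj}>0$) rewrites this as $\frac12\hat\sigma_{jj}(\theta_{kj}-\tilde\theta_{kj})^2$ plus a constant independent of $\btheta_{.j}$, with $\tilde\theta_{kj}$ exactly as defined in the statement. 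Summing over $k$, adding the single penalty term $\lambda\Vert\btheta_{.j}\Vert$ that involves block $j$, and dividing the whole expression by $\hat\sigma_{jj}$ leaves the minimizer unchanged and reproduces \eqref{bd1}.

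For the second claim I would treat \eqref{bd1} as the prox-type problem $\min_{\mathbf{u}}\tfrac12\Vert\mathbf{u}-\tilde\btheta_{.j}\Vert^2+\gamma\Vert\mathbf{u}\Vert$ with $\gamma=\lambda/\hat\sigma_{jj}$. Since the objective is convex but $\Vert\cdot\Vert$ is nondifferentiable at the origin, I would invoke the subdifferential optimality condition $\mathbf{0}\in(\mathbf{u}-\tilde\btheta_{.j})+\gamma\,\partial\Vert\mathbf{u}\Vert$, where $\partial\Vert\mathbf{u}\Vert=\{\mathbf{u}/\Vert\mathbf{u}\Vert\}$ when $\mathbf{u}\neq\mathbf{0}$ and $\partial\Vert\mathbf{0}\Vert=\{\mathbf{z}:\Vert\mathbf{z}\Vert\le1\}$. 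On the branch $\mathbf{u}\neq\mathbf{0}$ the condition forces $\mathbf{u}$ to be a positive scalar multiple of $\tilde\btheta_{.j}$; taking norms gives $\Vert\mathbf{u}\Vert=\Vert\tilde\btheta_{.j}\Vert-\gamma$, which is admissible exactly when $\Vert\tilde\btheta_{.j}\Vert>\gamma$, and then $\mathbf{u}=\tilde\btheta_{.j}(1-\gamma/\Vert\tilde\btheta_{.j}\Vert)$. On the branch $\mathbf{u}=\mathbf{0}$ the condition is satisfiable iff $\Vert\tilde\btheta_{.j}\Vert\le\gamma$. Combining the two complementary cases collapses to the truncated form \eqref{bd2}.

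The algebra of completing the square and the scalar norm computation are routine. The step that needs care is the case analysis at $\mathbf{u}=\mathbf{0}$: it is precisely the nonsmoothness of the group norm there that governs whether the entire block $\btheta_{.j}$ is zeroed out, and hence is what produces the $(\cdot)_+$ truncation rather than a plain linear shrinkage.
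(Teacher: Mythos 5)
Your proof is correct. For the reduction step it follows essentially the same route as the paper: expand $\frac{1}{2}\btheta_k^\T\hat\bSigma\btheta_k$, use the symmetry $\hat\sigma_{lj}=\hat\sigma_{jl}$ to collect the terms involving $\btheta_{.j}$, and discard the terms that are constant in that block; your completing-the-square plus division by $\hat\sigma_{jj}>0$ is exactly the equivalence with \eqref{bd1} that the paper asserts without writing out. Where you go beyond the paper is the thresholding step: the paper's proof simply says \eqref{bd2} follows from \eqref{bd1} and cites the group-lasso literature, whereas you derive it from the subdifferential optimality condition, including the case analysis at $\mathbf{u}=\mathbf{0}$, which makes the argument self-contained. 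One point deserves attention, and your own derivation exposes it: carried out faithfully, your argument yields the threshold $\gamma=\lambda/\hat\sigma_{jj}$, i.e. $\hat\btheta_{.j}=\tilde\btheta_{.j}\bigl(1-\lambda/(\hat\sigma_{jj}\Vert\tilde\btheta_{.j}\Vert)\bigr)_{+}$, which is the correct solution to \eqref{bd1}; the printed formula \eqref{bd2} has $\lambda$ where $\gamma$ should be, so it agrees with your result only when $\hat\sigma_{jj}=1$. This mismatch between \eqref{bd1} and \eqref{bd2} sits in the paper's own statement (presumably a typo, or an implicit standardization of the predictors), but your closing claim that the two cases ``collapse to the truncated form \eqref{bd2}'' silently identifies $\gamma$ with $\lambda$; you should instead state the solution with $\gamma$ explicitly and note that it coincides with \eqref{bd2} only under $\hat\sigma_{jj}=1$.
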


Based on Lemma~\ref{alg} we use the following blockwise-descent algorithm to implement our multiclass sparse discriminant analysis.

\begin{algorithm}[Multiclass sparse discriminant analysis for a given penalization parameter]
\
\begin{enumerate}
\item Compute $\hat\bSigma$ and $\hat\bdelta^k$, $k=1,2,\ldots,K$;
\item Initialize $\hat\btheta_k^{(0)}$ and compute $\tilde\btheta_k^{(0)}$ accordingly;
\item For $m=1,\ldots, $ do the following loop until convergence:
\
for $j=1,\ldots, p$,
\begin{enumerate}
\item compute
\begin{eqnarray*}
\hat \btheta_{.j}^{(m)}&=& \tilde \btheta_{.j}^{(m-1)} \left(1-\frac{\lambda}{\Vert \tilde \btheta_{.j}^{(m-1)} \Vert} \right)_{+};
\eean
\item
update
\bean
\tilde \theta_{kj}&=&\frac{\hat \delta^k_{j}-\sum_{ l \neq j }\hat \sigma_{lj} \hat\theta_{kl}^{(m)}}{\hat \sigma_{jj}}.
\end{eqnarray*}
\end{enumerate}
\item Let $\hat\btheta_k$ be the solution at convergence. The output classifier is the usual linear discriminant classifier on $(\bX^\T\hat\btheta_2,\ldots,\bX^\T\hat\btheta_K)$.
\end{enumerate}
\end{algorithm}

 We have implemented our method in an R package {\tt msda} which is available on CRAN. Our package also handles the version of \eqref{estimator2} using an adaptive group lasso penalty, because both Lemma 1 and Algorithm 1 can be easily generalized to handle the adaptive group lasso penalty.

\section{Theory}
In this section we study theoretical properties of our proposal under the setting where $p$ can be much larger than $n$. Under regularity conditions we show that our method can consistently select the true subset of variables and at the same time consistently estimate the Bayes rule.

We begin with some useful notation. For a vector $\balpha$, $\vinf{\balpha}=\max_j|\alpha_j|,\vone{\balpha}=\sum_j|\alpha_j|$, while, for a matrix $\bOmega\in\mathbb{R}^{m\times n}$, $\vinf{\bOmega}=\max_{i}\sum_{j}|\omega_{ij}|$, $\vone{\bOmega}=\max_{j}\sum_{i}|\omega_{ij}|$. Define
\bean
&&\varphi=\max\{\vinf{\bSigma_{\cal D^C, D}},\vinf{\bSigma^{-1}_{\cal D, \cal D}}\}, \Delta=\max\{\vone{\bmu},\vone{\btheta}\};\\
&&\theta_{\min}=\min_{(k,j):\theta_{kj}\ne 0}|\theta_{kj}|, \theta_{\max}=\max_{(k,j)}|\theta_{kj}|;\\
&&
\Vert \bSigma_{\cal D^C,\cal D}\bSigma^{-1}_{\cal D,\cal D} \Vert_{\infty}=\eta^*.
\eean
Let $d$ be the cardinality of ${\cal D}$.

Define $\bt_{\cal D}\in \mathbb{R}^{d\times (K-1)}$ as the subgradient of the group lasso penalty at the true $\btheta_{\cal D}$ and
we assume the following condition:
\begin{enumerate}
\item[(C0)]
$
\max_{j \in {\cal D}^c} \{\sum_{k=2}^K (\bSigma_{j,\cal D}\bSigma_{\cal D,\cal D}^{-1}\bt_{k,\cal D})^2\}^{1/2}=\kappa <1.
$
\end{enumerate}
Condition (C0) is required to guarantee the selection consistency.
A condition similar to condition (C0) has been used to study the group lasso penalized regression model \citep{bach08}.

We further let $\varphi,\Delta, \eta^*,\kappa$ be fixed and assume the following regularity conditions:
\begin{enumerate}
\item[(C1)] There exists $c_1,C_1$ such that $\dfrac{c_1}{K}\le\pi_k\le\dfrac{C_1}{K}$ for $k=1,\ldots, K$ and $\dfrac{\theta_{\max}}{\theta_{\min}}<C_1$.
\item[(C2)] $n,p, \rightarrow \infty$ and $\dfrac{d^2\log{(pd)}}{n}\rightarrow 0$;
\item[(C3)] $\theta_{\min}\gg \{\dfrac{d^2\log{(pd)}}{n}\}^{1/2}$;
\item[(C4)] $\min_{k,k'}\{(\btheta_k-\btheta_{k'})^\T\bSigma(\btheta_k-\btheta_{k'})\}^{1/2}$ is bounded away from 0.
\end{enumerate}

Condition (C1) guarantees that we will have a decent sample size for each class.
Condition (C2) requires that $p$ cannot grow too fast with respect to $n$. This condition is very mild, because it can allow $p$ to grow at a nonpolynomial rate of $n$. In particular, if $d=O(n^{1/2-\alpha})$, then condition (C2) is satisfied if $\log{p}=o(n^{2\alpha})$. Condition (C3) guarantees that the nonzero coefficients are bounded away from 0, which is a common assumption in the literature. The lower bound of $\theta_{\min}$ tends to 0 under condition (C3). Condition (C4) is required such that all the classes can be separated from each other. If condition (C4) is violated, even the Bayes rule cannot work well.

In the following theorems, we let $C$ denote a generic positive constant that can vary from place to place. 

\begin{theorem}\label{asy}
\begin{enumerate}
\item Under conditions (C0)--(C1), there exists a generic constant $M$ such that, if $\lambda<\min\{\dfrac{\theta_{\min}}{8\varphi},M(1-\kappa)\}$, then with a probability greater than 
\begin{equation}
1-Cpd\exp(-Cn\dfrac{\epsilon^2}{Kd^2})-CK\exp(-C\dfrac{n}{K^2})-Cp(K-1)\exp(-Cn\dfrac{\epsilon^2}{K})
\end{equation} 

we have that $\hat{\cal D}=\cal D$, and $\vinf{\hat\btheta_{k}-\btheta_{k}^{\mathrm{Bayes}}}\le 4\varphi \lambda$ for $k=2,\ldots, K$.

\item If we further assume conditions (C2)--(C3),
we have that
if $
\{\dfrac{d^2\log{(pd)}}{n}\}^{1/2}
\ll \lambda \ll\theta_{\min}$,
 then with probability tending to 1, we have $\hat{\cal D}=\cal D$, and $\vinf{\hat\btheta_{k}-\btheta_{k}^{\mathrm{Bayes}}}\le 4\varphi \lambda$ for $k=2,\ldots, K$.
 
\end{enumerate}
\end{theorem}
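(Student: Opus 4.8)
The plan is to use a primal--dual witness (oracle) construction, the standard device for selection consistency of $\ell_1$- and group-lasso-type estimators. Since the objective in \eqref{estimator2} is convex and the group penalty couples the coordinates only within each $j$, its global optimum is characterized by the Karush--Kuhn--Tucker conditions: for every $j$ there is a subgradient vector with $(\hat\bSigma\hat\btheta_k)_j-\hat\delta^k_j+\lambda t_{kj}=0$, where the group subgradient $\bt_{\cdot j}$ satisfies $\|\bt_{\cdot j}\|\le1$, with equality $\bt_{\cdot j}=\hat\btheta_{\cdot j}/\|\hat\btheta_{\cdot j}\|$ whenever $\hat\btheta_{\cdot j}\ne0$. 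First I would define the oracle estimator $\tilde\btheta$ by forcing $\tilde\btheta_{{\cal D}^c}=0$ and solving the problem restricted to the true support ${\cal D}$; on ${\cal D}$ this gives $\tilde\btheta_{k,{\cal D}}=\hat\bSigma_{{\cal D},{\cal D}}^{-1}(\hat\bdelta^k_{{\cal D}}-\lambda\tilde\bt_{k,{\cal D}})$. The whole argument then reduces to showing that, with the stated probability, this oracle solution recovers the true support, satisfies the KKT conditions of the full problem, and is the unique minimizer.

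I would establish three pieces in turn. First, an estimation bound on ${\cal D}$: writing $\tilde\btheta_{k,{\cal D}}-\btheta^{\mathrm{Bayes}}_{k,{\cal D}}=\hat\bSigma_{{\cal D},{\cal D}}^{-1}(\hat\bdelta^k_{{\cal D}}-\bSigma_{{\cal D},{\cal D}}\btheta^{\mathrm{Bayes}}_{k,{\cal D}})-\lambda\hat\bSigma_{{\cal D},{\cal D}}^{-1}\tilde\bt_{k,{\cal D}}$ and using the population identity $\bSigma_{{\cal D},{\cal D}}\btheta^{\mathrm{Bayes}}_{k,{\cal D}}=\bdelta^k_{{\cal D}}$, the penalty term contributes at most $\varphi\lambda$ in $\ell_\infty$ (since $\|\hat\bSigma_{{\cal D},{\cal D}}^{-1}\|_\infty\le\varphi$ up to a concentration error and $\|\tilde\bt_{k,{\cal D}}\|_\infty\le1$), while the remaining term is noise governed by the deviations of $\hat\bSigma_{{\cal D},{\cal D}}$ and $\hat\bdelta_{{\cal D}}$ from their population values; budgeting this noise at $3\varphi\lambda$ yields $\vinf{\tilde\btheta_k-\btheta^{\mathrm{Bayes}}_k}\le4\varphi\lambda$. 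Second, no false negatives: because $\lambda<\theta_{\min}/(8\varphi)$ the error bound gives $|\tilde\theta_{kj}|\ge\theta_{\min}-4\varphi\lambda>\theta_{\min}/2>0$ at every truly nonzero coordinate, so every $j\in{\cal D}$ is selected. Third, strict dual feasibility (no false positives): for $j\in{\cal D}^c$ the induced subgradient is $z_{kj}=\lambda^{-1}(\hat\delta^k_j-\hat\bSigma_{j,{\cal D}}\hat\bSigma_{{\cal D},{\cal D}}^{-1}\hat\bdelta^k_{{\cal D}})+\hat\bSigma_{j,{\cal D}}\hat\bSigma_{{\cal D},{\cal D}}^{-1}\tilde\bt_{k,{\cal D}}$; the population counterpart of the first bracket vanishes by the optimality relation $\delta^k_j=\bSigma_{j,{\cal D}}\btheta^{\mathrm{Bayes}}_{k,{\cal D}}$, and, with $\tilde\bt$ replaced by the true subgradient $\bt$, the population counterpart of the second term has group norm at most $\kappa<1$ by condition (C0). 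Thus I would show $\max_{j\in{\cal D}^c}\|\mathbf{z}_{\cdot j}\|\le\kappa+(\text{concentration error})<1$, which verifies the KKT conditions on ${\cal D}^c$, forces $\hat\btheta_{{\cal D}^c}=0$, and hence gives $\hat{\cal D}={\cal D}$; uniqueness then follows from positive definiteness of $\hat\bSigma_{{\cal D},{\cal D}}$ together with this strict inequality.

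Every step reduces to concentration of sample quantities, which supplies the three exponential terms. Since the within-class centered data are Gaussian, the pooled $\hat\bSigma$ concentrates entrywise at a sub-exponential rate; controlling the $p\times d$ block $\hat\bSigma_{{\cal D}^c,{\cal D}}$, and the inverse $\hat\bSigma_{{\cal D},{\cal D}}^{-1}$ through a perturbation argument that needs $d\,\|\hat\bSigma_{{\cal D},{\cal D}}-\bSigma_{{\cal D},{\cal D}}\|$ small (hence the $d^2$ in condition (C2)), produces the $Cpd\exp(-Cn\epsilon^2/(Kd^2))$ term. The class counts $n_k$ concentrate around $n\pi_k$ by a Bernstein bound, giving the $CK\exp(-Cn/K^2)$ term and, via (C1), a usable lower bound on each $n_k$. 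The mean differences $\hat\bdelta^k-\bdelta^k$ are Gaussian with variance of order $1/n_k$ in each of the $p(K-1)$ coordinates, giving the $Cp(K-1)\exp(-Cn\epsilon^2/K)$ term. I expect the strict dual-feasibility step to be the main obstacle: it is the grouped analogue of the irrepresentable condition, and forcing $\|\mathbf{z}_{\cdot j}\|<1$ uniformly over all $p-d$ inactive coordinates requires controlling $\hat\bSigma_{j,{\cal D}}\hat\bSigma_{{\cal D},{\cal D}}^{-1}$ in the $\ell_2$-over-$k$ group norm. The replacement of the oracle subgradient $\tilde\bt_{k,{\cal D}}$ by the true $\bt_{k,{\cal D}}$ is itself delicate, because on the active set the group subgradient is the \emph{unit vector} $\tilde\btheta_{\cdot j}/\|\tilde\btheta_{\cdot j}\|$, whose deviation from $\btheta^{\mathrm{Bayes}}_{\cdot j}/\|\btheta^{\mathrm{Bayes}}_{\cdot j}\|$ is of order (estimation error)$/\theta_{\min}$; this direction perturbation, absent in the scalar lasso, is precisely what makes the grouped case harder and interacts with the perturbation of $\hat\bSigma_{{\cal D},{\cal D}}^{-1}$ to produce the $d^2$ rather than $d$ in the sample-size requirement.

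For part 2 I would specialize the probability bound. Taking the deviation level $\epsilon$ proportional to $\lambda$ and using $\lambda\gg\{d^2\log(pd)/n\}^{1/2}$, each exponent $n\epsilon^2/(Kd^2)$, $n/K^2$, $n\epsilon^2/K$ dominates the corresponding logarithmic pre-factor $\log(pd)$, $\log K$, $\log(p(K-1))$ under conditions (C1)--(C2), so all three subtracted terms vanish and the probability tends to $1$. The upper requirement $\lambda\ll\theta_{\min}$ is exactly what makes $\lambda<\theta_{\min}/(8\varphi)$ hold eventually, since $\varphi$ is fixed, so part 1 applies and delivers both $\hat{\cal D}={\cal D}$ and $\vinf{\hat\btheta_k-\btheta^{\mathrm{Bayes}}_k}\le4\varphi\lambda$. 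The only genuine bookkeeping is to check that a single choice of $\epsilon$ simultaneously keeps the estimation error at $O(\varphi\lambda)$ and drives the tails to zero, which the separation between $\{d^2\log(pd)/n\}^{1/2}$ and $\theta_{\min}$ guaranteed by (C3) makes possible.
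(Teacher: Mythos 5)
Your proposal follows essentially the same route as the paper's own proof: an oracle (primal--dual witness) estimator restricted to ${\cal D}$, an $\ell_\infty$ estimation bound of $4\varphi\lambda$ on the active set, verification of the group KKT/strict dual feasibility condition on ${\cal D}^c$ via the population identity $\bdelta^k_{{\cal D}^c}=\bSigma_{{\cal D}^c,{\cal D}}\bSigma_{{\cal D},{\cal D}}^{-1}\bdelta^k_{{\cal D}}$, condition (C0) combined with a perturbation bound on the unit-vector group subgradient, and the same three concentration events yielding the stated exponential terms. The plan, including the part-2 specialization with $\epsilon\propto\lambda$, is correct and matches the paper's argument step for step.
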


Next, we show that our proposal is a consistent estimator of the Bayes rule in terms of the misclassification error rate. Define
\bean
R_n&=&\Pr(\hat Y(\hat\btheta_k,\hat\pi_k,k=1,\ldots, K)\ne Y\mid  \mathrm{observed \ data }),
\eean
where $\hat Y(\hat \btheta_k,\hat \pi_k,k=1,\ldots, K)$ is the prediction by our method. Also define $R$ as the Bayes error. Then we have the following conclusions.

\begin{theorem}\label{error}

\begin{enumerate}

\item Under conditions (C0)--(C1), there exists a generic constant $M_1$ such that, if $\lambda<\min\{\dfrac{\theta_{\min}}{8\varphi},M_1(1-\kappa)\}$, then with a probability greater than 
\begin{equation}
1-Cpd\exp(-Cn\dfrac{\epsilon^2}{Kd^2})-CK\exp(-C\dfrac{n}{K^2})-Cp(K-1)\exp(-Cn\dfrac{\epsilon^2}{K})
\end{equation} 

we have
\begin{equation}
|R_n-R|\le M_1\lambda^{1/3},
\end{equation}
for some generic constant $M_1$.

\item Under conditions (C0)--(C4), if $\lambda\rightarrow 0$, then with probability tending to 1, we have
$$
R_n\rightarrow R.
$$
\end{enumerate}

\end{theorem}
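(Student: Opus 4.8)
The plan is to deduce the error-rate statement from the estimation guarantees already established in Theorem~\ref{asy}. Both the plug-in rule and the Bayes rule are evaluated against a fresh test pair $(\bX,Y)$, and the Bayes rule is optimal, so $R_n\ge R$; comparing the two $0$--$1$ losses pointwise gives
\begin{equation*}
|R_n-R|=R_n-R\le \Pr\{\hat Y(\bX)\ne Y^{\mathrm{Bayes}}(\bX)\mid \text{data}\},
\end{equation*}
where the right-hand side is the conditional probability, given the estimated parameters, that the two classifiers disagree on $\bX$. It therefore suffices to bound this disagreement probability on the high-probability event ${\cal E}$ of Theorem~\ref{asy}, on which $\hat{\cal D}={\cal D}$ and $\vinf{\hat\btheta_k-\btheta_k^{\mathrm{Bayes}}}\le 4\varphi\lambda$, together with the companion concentration of $\hat\bmu_k$ and $\hat\pi_k$ around their targets that holds on the same event.

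Next I would pass to a pairwise comparison of discriminant scores. Writing $D_k(\bX)$ and $\hat D_k(\bX)$ for the Bayes and plug-in scores of class $k$ in \eqref{Bayes2}, a disagreement forces, for the Bayes-optimal label $k^\ast$ and some competitor $k$, that the Bayes margin $G_{kk^\ast}(\bX)=D_{k^\ast}(\bX)-D_k(\bX)\ge 0$ while its perturbed version is nonpositive; hence $|G_{kk^\ast}(\bX)|\le |Q_{kk^\ast}(\bX)|$, where $Q_{kk^\ast}=(\hat D_{k^\ast}-D_{k^\ast})-(\hat D_k-D_k)$ collects the perturbations. A union bound over the $\binom{K}{2}$ pairs reduces the task to bounding $\Pr\{|G_{kk'}(\bX)|\le |Q_{kk'}(\bX)|\}$ for each pair. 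The quantity $G_{kk'}$ is affine in $\bX$, hence Gaussian within each class with variance $(\btheta_k^{\mathrm{Bayes}}-\btheta_{k'}^{\mathrm{Bayes}})^\T\bSigma(\btheta_k^{\mathrm{Bayes}}-\btheta_{k'}^{\mathrm{Bayes}})$, which condition (C4) keeps bounded away from $0$ and which is bounded above since $\Delta$ is fixed; this yields the anti-concentration bound $\Pr\{|G_{kk'}|\le s\}\le Cs$ for every $s>0$.

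The perturbation $Q_{kk'}(\bX)$ is also affine in $\bX$: its random part is $\bv^\T\bX$ with $\bv$ supported on ${\cal D}$ and $\vinf{\bv}\le 8\varphi\lambda$, while its deterministic part collects the errors of $\hat\bmu_k$, $\hat\pi_k$ and of $\hat\btheta_k^\T\hat\bmu_k$, all controlled on ${\cal E}$. Using $|{\cal D}|=d$ together with the fixed bounds on $\bSigma$ and $\Delta$, a direct second-moment computation gives $\mathbb{E}\{Q_{kk'}(\bX)^2\mid\text{data}\}\le C\lambda^2$ on ${\cal E}$. Splitting at a threshold $s$,
\begin{equation*}
\Pr\{|G_{kk'}|\le |Q_{kk'}|\}\le \Pr\{|G_{kk'}|\le s\}+\Pr\{|Q_{kk'}|>s\}\le Cs+\frac{C\lambda^2}{s^2},
\end{equation*}
and taking $s\asymp\lambda^{1/3}$ makes the first term $O(\lambda^{1/3})$ while the second is $O(\lambda^{4/3})$, so the per-pair bound is $O(\lambda^{1/3})$ (not claimed sharp, but enough here). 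Summing over the finitely many pairs, on the event ${\cal E}$—whose probability is exactly the one displayed in the statement—we obtain $|R_n-R|\le M_1\lambda^{1/3}$; this is part~1. For part~2, conditions (C2)--(C3) make Theorem~\ref{asy}(2) applicable so that ${\cal E}$ holds with probability tending to $1$, (C4) supplies the anti-concentration uniformly, and $\lambda\to0$ forces $\lambda^{1/3}\to0$; hence $R_n\to R$ in probability.

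The step I expect to be the main obstacle is the joint control inside the pairwise bound: establishing the anti-concentration of the Bayes margin $G_{kk'}$ with a constant that does not degrade—this is precisely where (C4), and the upper bound on its variance through the fixed $\Delta$, are essential—while simultaneously verifying on ${\cal E}$ that the \emph{full} perturbation $Q_{kk'}$, including the mean and prior estimation errors and not merely the direction error from Theorem~\ref{asy}, has second moment $O(\lambda^2)$. The threshold choice that converts these two ingredients into the exponent $1/3$ is then routine.
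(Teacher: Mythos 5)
Your proposal is correct and follows essentially the same route as the paper's own proof: both reduce $|R_n-R|$ to the conditional probability that the plug-in and Bayes classifiers disagree, bound the small-margin event via within-class Gaussian anti-concentration of the Bayes score differences (which is where (C4) enters, in part 1 as well, exactly as in the paper), bound the score perturbations via a Chebyshev/Markov argument on the high-probability event of Theorem 1 (augmented by concentration of $\hat\pi_k$), and balance the two terms with a threshold of order $\lambda^{1/3}$. The only noteworthy discrepancy is that your claimed conditional second-moment bound $E\{Q_{kk'}^2\mid \mathrm{data}\}\le C\lambda^2$ is optimistic (it hides a factor involving $d$, since $\Vert\bv\Vert_1$ can be of order $d\lambda$); the paper instead works with the weaker bound $C\lambda$ on the perturbation variance, and since your threshold choice $s\asymp\lambda^{1/3}$ still gives $Cs+C\lambda/s^2=O(\lambda^{1/3})$ under that weaker bound, the conclusion is unaffected.
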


{\bf Remark 2}. Based on our proof we can further derive the asymptotic results by letting $K$ (the number of classes) diverge with $n$ to infinity.  We only need to use more cumbersome notion and bounds, but the analysis remains pretty much the same. To show a clearer picture of the theory, we have focused on the fixed $K$ case.

\section{Numerical Studies}
\subsection{Simulations}
We demonstrate our proposal by simulation. For comparison, we include the sparse optimal scoring and $\ell_1$ penalized Fisher's discriminant analysis in the simulation study. Four simulation models are considered where the dimension $p=800$ and the training set has a sample size $n=75K$, where $K$ is the number of classes in each model. We generate a validation set of size $n$ to select the tuning parameters and a testing set of size 1000 for each method. Recall that $\bbeta_k=\bSigma^{-1}\bmu_k$. We specify $\bbeta_k$ and $\bSigma$ as in the following four models and then let $\bmu_k=\bSigma\bbeta_k$. For simplicity, we say that a matrix $\bSigma$ has the AR($\rho$) structure if $\sigma_{jk}=\rho^{|j-k|}$ for $j,k=1,\ldots, p$; on the other hand, $\bSigma$ has the CS($\rho$) structure if $\sigma_{jk}=\rho$ for any $j\ne k$ and $\sigma_{jj}=1$ for $j=1,\ldots,p$.

Model 1: 

$K=4$, $\beta_{jk}=1.6$ for $j=2k-1,2k; k=1,\ldots, K$ and $\beta_{jk}=0$ otherwise. The covariance matrix
 $\bSigma$ has the AR($0.5$) structure.

Model 2: 

$K=6$, $\beta_{jk}=2.5$ for $j=2k-1,2k; k=1,\ldots, K$ and $\beta_{jk}=0$ otherwise. The covariance matrix $\bSigma=\bI_5\otimes \bOmega$, where $\bOmega$ has the CS($0.5$) structure. 

Model 3: 

$K=4$, $\beta_{jk}=k+u_{jk}$ for $j=1,\ldots,K$, where $u_{jk}$ follows the uniform distribution over the interval $[-1/4,1/4]$; $\beta_{jk}=0$ otherwise. The covariance matrix $\bSigma$ has the CS($0.5$) structure.

Model 4: 

$K=4$, $\beta_{jk}=k+u_{jk}$ for $j=1,\ldots,4$, where $u_{jk}$ follows the uniform distribution over the interval $[-1/4,1/4]$; $\beta_{jk}=0$ otherwise. The covariance matrix $\bSigma$ has the CS($0.8$) structure.

Model 5: 

$K=4$, $\beta_{2,1}=\ldots=\beta_{2,8}=1.2$, $\beta_{3,1}=\ldots=\beta_{3,4}=-1.2$,
$\beta_{3,5}=\ldots=\beta_{3,8}=1.2$, $\beta_{4,2j-1}=-1.2, \beta_{4,2j}=1.2$ for $j=1,\ldots,4$; $\beta_{jk}=0$ otherwise. The covariance matrix $\bSigma$ has the AR($0.5$) structure.

Model 6: 

$K=4$, $\beta_{2,1}=\ldots=\beta_{2,8}=1.2$, $\beta_{3,1}=\ldots=\beta_{3,4}=-1.2$
$\beta_{3,5}=\ldots=\beta_{3,8}=1.2$, $\beta_{4,2j-1}=-1.2, \beta_{4,2j}=1.2$ for $j=1,\ldots,4$; $\beta_{jk}=0$ otherwise. The covariance matrix $\bSigma$ has the AR($0.8$) structure. 

 The error rates of these methods are listed in Table~\ref{sim}. To compare variable selection performance, we report the number of correctly selected variables (C) and the number of incorrectly selected variables (IC) by each method. We want to highlight two observations from Table 1. First, our method is the best across all six models. Second, our method is a very good approximation of the Bayes rule in terms of both sparsity and misclassification error rate. Although our method tends to select a few more variables besides the true ones, this can be improved by using the adaptive group lasso penalty \citep{bach08}. Because the other two methods do not use the adaptive lasso penalty, we do not include the results of our method using the adaptive group lasso penalty for a fair comparison.

\renewcommand{\baselinestretch}{1}
\begin{table}
\begin{tabular}{c|cccc|cccc}
\hline
&Bayes&Our&Witten&Clemmensen&Bayes&Our&Witten&Clemmensen\\
\hline
&\multicolumn{4}{c}{Model 1}&\multicolumn{4}{|c}{Model 2}\\
Error(\%)&11.0&12.4&15.5&13&13.3&15.2&31.7&17\\
&(0.06)&(0.07)&(0.07)&(0.06)&(0.05)&(0.07)&(0.20)&(0.08)\\
C&8&8&8&8&12&12&12&12\\
&&(0)&(0)&(0)&&(0)&(0)&(0)\\
IC&0&10&126&5&0&15&19.5&16\\
&&(0.6)&(4.9)&(0.4)&&(0.7)&(1.5)&(0.3)\\
\hline
&\multicolumn{4}{c}{Model 3}&\multicolumn{4}{|c}{Model 4}\\
Error(\%)&8.8&9.4&14.1&12.7&5.3&5.7&7&7.6\\
&(0.06)&(0.09)&(0.06)&(0.08)&(0.06)&(0.08)&(0.05)&(0.07)\\
C&4&4&4&4&4&4&4&4\\
&&(0)&(0)&(0)&&(0)&(0)&(0)\\
IC&0&3&796&30&0&4&796&30\\
&&(0.4)&(0)&(0.2)&&(0.5)&(0)&(2.2)\\
\hline
&\multicolumn{4}{c}{Model 5}&\multicolumn{4}{|c}{Model 6}\\
Error(\%)&8.3&9.5&17.9&13.6&14.2&17.4&23.4&24.8\\
&(0.05)&(0.07)&(0.14)&(0.09)&(0.06)&(0.08)&(0.09)&(0.09)\\
C&8&8&8&8&8&8&8&6\\
&&(0)&(0)&(0)&&(0.0)&(0)&(0.1)\\
IC&0&6&97&4&0&0&4&3\\
&&(0.9)&(2.8)&(0.5)&&(0)&(0.5)&(0.3)\\
\hline
\end{tabular}
\caption{Simulation results for Models 1--6. The two competing methods are denoted by the first author of the original papers. In particular, Witten's method is the $\ell_1$ penalized Fisher's discriminant analysis, and Clemmensen's method is the sparse optimal scoring method. The reported numbers are medians based on 500 replicates. Standard errors are in parentheses. The quantity C is the number of correctly selected variables, and IC is the number of incorrectly selected variables.}\label{sim}
\end{table}
\renewcommand{\baselinestretch}{2}

\subsection{A real data example}
We further demonstrate the application of our method on the IBD dataset \citep{IBD}. This dataset contains 22283 gene expression levels from 127 people. These 127 people are either normal people, people with Crohn's disease or people with ulcerative colitis. This dataset can be downloaded from Gene Expression Omnibus with accession number GDS1615.  We randomly split the datasets with a 2:1 ratio in a balanced manner to form the training set and the testing set.

It is known that the marginal $t$-test screening \citep{FF2008} can greatly speed up the computation for linear discriminant analysis in binary problems. For a multiclass problem the natural generalization of $t$-test screening is the $F$-test screening. Compute the $F$-test statistic for each $X_j$ defined as
\begin{equation*}
f_j=\dfrac{\sum_{k=1}^K n_k(\hat \mu_{kj}-\hat{\bar\mu}_j)^2/(G-1)}{\sum_{i=1}^n (X_j^i-\hat \mu_{Y^i,j})^2/(n-G)},
\end{equation*}
where $\hat{\bar\mu}_j$ is the sample grand mean for $X_j$ and $n_g$ is the within-group sample size.
Based on the $F$-test statistic, we define the $F$-test screening by only keeping the predictors with $F$-test statistics among the $d_n$th largest. As recommended by many researchers \citep{FF2008, FS2010, Kfilter}, $d_n$ can be the same as the sample size, if we believe that the number of truly important variables is much smaller than the sample size. Therefore, we let $d_n=127$ for the current dataset.

We estimate the rules given by sparse optimal scoring, $\ell_1$ penalized Fisher's discriminant analysis and our proposal on the training set. The tuning parameters are chosen by 5 fold cross validation. Then we evaluate the classification errors on the testing set. The results based on 100 replicates are listed in Table~2. It can be seen that our proposal achieves the highest accuracy with the sparsest classification rule. This again shows that our method is a very competitive classifier.

\begin{table}
\begin{center}
\begin{tabular}{l|ccc}
\hline
&Our&Witten&Clemmensen\\
\hline
Error(\%)&7.32(0.972)&21.95(1.10)&9.76(0.622)\\
Fitted Model Size&25(0.7)&127(0)&27(0.5)\\
\hline
\end{tabular}
\caption{Classification and variable selection results on the real dataset. The two competing methods are denoted by the first author of the original papers. In particular, Witten's method is the $\ell_1$ penalized Fisher's discriminant analysis, and Clemmensen's method is the sparse optimal scoring method. All numbers are medians based on 100 random splits. Standard errors are in parentheses.}
\end{center}
\end{table}\label{realdata}

\section{Summary}
In this paper we have proposed a new formulation to derive sparse multiclass discriminant classifiers. We have shown that our proposal has a solid theoretical foundation and can be solved by a very efficient computational algorithm. Our proposal actually gives a unified treatment of the multiclass and binary classification problems. We have shown that the solution path of the binary version of our proposal is equivalent to that by ROAD and DSDA. Moreover, LPD is identical to the Dantzig selector formulation of our proposal for the binary case. In light of this evidence, our proposal is regarded as the natural multiclass generalization of those proven successful binary sparse LDA methods.

\section*{Appendices}

\subsection*{A.1 Connections with Fisher's discriminant analysis}
For simplicity, in this subsection we denote $\bmeta$ as the discriminant directions defined by Fisher's discriminant analysis in \eqref{fisher}, and $\btheta$ as the discriminant directions defined by Bayes rule. Our method gives a sparse estimate of $\btheta$. In this section, we discuss the connection between $\btheta$ and $\bmeta$, and hence the connection between our method and Fisher's discriminant analysis. We first comment on the advantage of directly estimating $\btheta$ rather than estimating $\bmeta$. Then we discuss how to estimate $\bmeta$ once $\hat\btheta$ is available. 

There are two advantages of estimating $\btheta$ rather than $\bmeta$. Firstly, estimating $\btheta$ allows for simultaneous estimation of all the discriminant directions. Note that \eqref{fisher} requires that $\bmeta_k^\T\bSigma\bmeta_l=0$ for any $l<k$. This requirement almost necessarily leads to a sequential optimization problem, which is indeed the case for sparse optimal scoring and $\ell_1$ penalized Fisher's discriminant analysis. In our proposal, the discriminant direction $\btheta_k$ is determined by the covariance matrix and the mean vectors $\bmu_k$ within Class k, but is not related to $\btheta_l$ for any $l\ne k$. Hence, our proposal can simultaneously estimate all the directions by solving a convex problem.  Secondly, it is easy to study the theoretical properties if we focus on $\btheta$. On the population level, $\btheta$ can be written out in explicit forms and hence it is easy to calculate the difference between $\btheta$ and $\hat\btheta$ in the theoretical studies. Since $\bmeta$ do not have closed-form solutions even when we know all the parameters, it is relatively harder to study its theoretical properties.

Moreover, if one is specifically interested in the discriminant directions $\bmeta$, it is very easy to obtain a sparse estimate of them once we have a sparse estimate of $\btheta$. For convenience, for any positive integer $m$, denote $0_m$ as an $m$-dimensional vector with all entries being 0, $1_m$ as an $m$-dimensional vector with all entries being 1, and $\bI_m$ as the $m\times m$ identity matrix. The following lemma provides an approach to estimating $\bmeta$ once $\hat\btheta$ is available. The proof is relegated to Section A.2.

\begin{lemma}\label{lemma:fisher}

The discriminant directions $\bmeta$ contain all the right eigenvectors of $\btheta_0\bPi\bdelta_0^\T$ corresponding to positive eigenvalues,
where $\btheta_0=(0_p,\btheta)$, $\bPi=\bI_K-\frac{1}{K}\mathrm{1}_{K}\mathrm{1}_{K}^\T$, and $\bdelta_0=(\bmu_{1}-\bar\bmu,\ldots,\bmu_{K}-\bar\bmu)$ with $\bar\bmu=\sum_{k=1}^K \pi_k\bmu_k$.
\end{lemma}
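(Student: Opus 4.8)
The plan is to connect the Bayes directions $\btheta$ to the Fisher directions $\bmeta$ by showing that both diagonalize the same generalized eigenvalue problem, and then to recognize the matrix $\btheta_0\bPi\bdelta_0^\T$ as the operator whose eigenvectors encode $\bmeta$. First I would recall the two characterizations. The Fisher directions in \eqref{fisher} solve the generalized eigenvalue problem $\bSigma_b\bmeta_k=\ell_k\bSigma\bmeta_k$, so $\bmeta$ consists of the leading generalized eigenvectors of the pencil $(\bSigma_b,\bSigma)$, where $\bSigma_b=\frac{1}{K-1}\sum_k\pi_k(\bmu_k-\bar\bmu)(\bmu_k-\bar\bmu)^\T$ after accounting for the class weights. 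On the other side, the Bayes directions are $\btheta_k=\bSigma^{-1}(\bmu_k-\bmu_1)$ for $k=2,\ldots,K$, and with the padding convention $\btheta_0=(0_p,\btheta)$ the $k$th column of $\btheta_0$ is exactly $\bSigma^{-1}(\bmu_k-\bmu_1)$ for every $k=1,\ldots,K$ (the first column being $0_p=\bSigma^{-1}(\bmu_1-\bmu_1)$).

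The key computational step is to rewrite $\bdelta_0^\T$ so that the column differences $\bmu_k-\bmu_1$ appear. I would use the centering identity $\bdelta_0=(\bmu_1-\bar\bmu,\ldots,\bmu_K-\bar\bmu)=(\bmu_1,\ldots,\bmu_K)\bPi$, since right-multiplication by $\bPi=\bI_K-\frac{1}{K}1_K1_K^\T$ subtracts the (unweighted) column mean. The next step is to express the padded Bayes matrix as $\btheta_0=\bSigma^{-1}(\bmu_1,\ldots,\bmu_K)\bPi'$ for an appropriate difference-forming matrix, so that $\btheta_0\bPi\bdelta_0^\T$ collapses into $\bSigma^{-1}M\bPi M^\T$ where $M=(\bmu_1,\ldots,\bmu_K)$ and the accumulated $\bPi$-factors reconstitute the between-class scatter $M\bPi M^\T$ (up to the $\pi_k$ weighting, which I would track carefully). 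Concretely, I expect $\btheta_0\bPi\bdelta_0^\T=\bSigma^{-1}\bSigma_b$ up to a positive scalar, at which point the right eigenvectors of $\btheta_0\bPi\bdelta_0^\T$ with positive eigenvalues are precisely the generalized eigenvectors of $(\bSigma_b,\bSigma)$ with positive eigenvalue, i.e.\ the Fisher directions $\bmeta$.

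The final step is an eigenvector bookkeeping argument: I would verify that $\bSigma^{-1}\bSigma_b$ and the pencil $(\bSigma_b,\bSigma)$ share the same eigenvectors (if $\bSigma_b\bmeta=\ell\bSigma\bmeta$ then $\bSigma^{-1}\bSigma_b\bmeta=\ell\bmeta$), and that the number of positive eigenvalues equals the rank of $\bSigma_b$, which is at most $K-1$, matching the dimension of $\bmeta$. The normalization $\bmeta_k^\T\bSigma\bmeta_k=1$ and the $\bSigma$-orthogonality constraints in \eqref{fisher} are then recovered as the standard orthogonality of generalized eigenvectors, so ``contains all the right eigenvectors corresponding to positive eigenvalues'' is the correct statement. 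I expect the main obstacle to be the careful matching of the two $\bPi$ factors and the class-prior weights $\pi_k$: the definition here uses $\bar\bmu=\sum_k\pi_k\bmu_k$ (weighted) inside $\bdelta_0$ while the projection $\bPi$ is the unweighted centering matrix, so I would need to confirm that the algebra still yields exactly $\bSigma_b$ (possibly with a diagonal weight matrix $\mathrm{diag}(\pi_1,\ldots,\pi_K)$ appearing between the factors) rather than a distorted scatter matrix, and that any such weighting only rescales eigenvalues without changing the eigenvectors.
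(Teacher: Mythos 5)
Your proposal is correct and takes essentially the same route as the paper: the paper's proof sets $\tilde\bdelta=(0_p,\bdelta)$, uses the centering identity $\bdelta_0=\tilde\bdelta\bPi$ to get $\btheta_0\bPi\bdelta_0^\T=\bSigma^{-1}\bdelta_0\bdelta_0^\T$, and then invokes its Proposition 3 (proved by the whitening substitution $\bu_k=\bSigma^{1/2}\bmeta_k$ plus the $\bA\bB$/$\bB\bA$ eigenvector lemma) that the solutions of \eqref{fisher} are exactly the eigenvectors of $\bSigma^{-1}\bdelta_0\bdelta_0^\T$ with positive eigenvalues, which is your pencil $(\bSigma_b,\bSigma)$ argument up to the scalar $K-1$. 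The prior-weight complication you flag does not actually arise: the paper's proof (consistent with its definition of $\bSigma_b$ in \eqref{fisher}) uses the unweighted mean $\bar\bmu=\frac{1}{K}\sum_k\bmu_k$, so no $\mathrm{diag}(\pi_1,\ldots,\pi_K)$ factor appears; the weighted $\bar\bmu$ in the lemma statement is a typo in the paper rather than something your algebra must absorb.
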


Therefore, once we have obtained a sparse estimate of $\btheta$, we can estimate $\bmeta$ as follows. Without loss of generality write $\hat\btheta=(\hat\btheta_{\hat{\cal D}}^\T,0)^\T$, where $\hat{\cal D}=\{j:\hat\btheta_{\cdot j}\ne 0\}$. Then $\hat\btheta_0=(0,\hat\btheta)$. On the other hand, set $\hat\bdelta_0=(\hat\bmu_1-\hat{\bar\bmu},\ldots,\hat\bmu_K-\hat{\bar\bmu})$ where $\hat\bmu_k$ are sample estimates and $\hat{\bar{\bmu}}=\sum_{k=1}^K\hat{\pi}_k\hat\bmu_k$. It follows that $\hat\btheta_0\bPi\hat\bdelta_{0}=((\hat\btheta_{0,\hat{\cal D}}\bPi\hat\bdelta_{0,\hat{\cal D}}^\T)^\T,0)^\T$. Consequently, we can perform eigen-decomposition on $\hat\btheta_{0,\hat{\cal D}}\bPi\hat\bdelta_{0,\hat{\cal D}}^\T$ to obtain $\hat \bmeta_{\hat{\cal D}}$. Because $\hat{\cal D}$ is a small subset of the original dataset, this decomposition will be computationally efficient. Then $\hat\bmeta$ would be $(\hat \bmeta_{\hat{\cal D}}^\T,0)^\T$.

\subsection*{A.2 Technical Proofs}

\begin{proof}[Proof of Proposition~\ref{equiv}]
We first show \eqref{prop1:eq1}.

For a vector $\btheta\in\mathbb{R}^p$, Define
\begin{eqnarray}
L^{\mathrm{MSDA}}(\btheta,\lambda)&=&\frac{1}{2}\btheta^\T \hat \bSigma \btheta-(\hat \bmu_2-\hat \bmu_1)^{\T} \btheta+\lambda \Vert\btheta\Vert_1,\\
L^{\mathrm{ROAD}}(\btheta,\lambda)&=&\btheta^\T\hat\bSigma\btheta+\lambda \Vert\btheta\Vert_1
\end{eqnarray}

Set $\tilde \btheta=c_0(\lambda)^{-1}\hat \btheta^{\mathrm{MSDA}}(\lambda)$. Since $\tilde\btheta^\T(\hat\bmu_2-\bmu_1)=1$, it suffices to check that, for any $\tilde\btheta'$ such that $(\tilde\btheta')^\T(\hat\bmu_2-\bmu_1)=1$, we have $L^{\mathrm{ROAD}}(\tilde\btheta,\frac{\lambda}{|c_0(\lambda)|})\le L^{\mathrm{ROAD}}(\tilde\btheta',\frac{\lambda}{|c_0(\lambda)|})$. Now for any such $\tilde \btheta'$,
\begin{equation}
L^{\mathrm{MSDA}}(c_0(\lambda)\tilde\btheta',\lambda)=c_0(\lambda)^2 L^{\mathrm{ROAD}}(\tilde\btheta',\frac{\lambda}{|c_0(\lambda)|}) -c_0(\lambda)
\end{equation}
Similarly,
\begin{equation}
L^{\mathrm{MSDA}}(c_0(\lambda)\tilde\btheta,\lambda)=c_0(\lambda)^2 L^{\mathrm{ROAD}}(\tilde\btheta,\frac{\lambda}{|c_0(\lambda)|}) -c_0(\lambda).
\end{equation}
Since $L^{\mathrm{MSDA}}(c_0(\lambda)\tilde\btheta,\lambda)\le L^{\mathrm{MSDA}}(c_0(\lambda)\tilde\btheta',\lambda)$, we have \eqref{prop1:eq1}.

On the other hand, by Theorem 1 in \cite{equivalence}, we have
\begin{equation}
\hat\btheta^{\mathrm{DSDA}}(\lambda)=c_1(\lambda)\hat\btheta^{\mathrm{ROAD}}(\frac{\lambda}{n|c_1(\lambda)|})
\end{equation}
Therefore,
\begin{eqnarray}
\hat\btheta^{\mathrm{ROAD}}(\frac{\lambda}{|c_0(\lambda)|})&=&\hat\btheta^{\mathrm{ROAD}}\left((\frac{n|c_1(\lambda)|\lambda}{|c_0(\lambda)|})/(n|c_1(\lambda)|)\right)\\
&=&\left(c_1(\frac{n|c_1(\lambda)|\lambda}{|c_0(\lambda)|})\right)^{-1}\hat\btheta^{\mathrm{DSDA}}\left(\frac{n|c_1(\lambda)|\lambda}{|c_0(\lambda)|}\right)\\
&=&(c_1(a\lambda))^{-1}\hat\btheta^{\mathrm{DSDA}}(a\lambda)\label{prop1:eq3}
\end{eqnarray}
Combine \eqref{prop1:eq3} with \eqref{prop1:eq1} and we have \eqref{prop1:eq2}.
\end{proof}

\begin{proof}[Proof of Lemma~\ref{alg}]
We start with simplifying the first part of our objective function, $\frac{1}{2}\btheta^\T_k \hat \bSigma \btheta_k-(\hat \bmu_k-\hat \bmu_1)^{\T} \btheta_k$. 

First, note that
\begin{eqnarray}
&&\frac{1}{2}\btheta^\T_k \hat \bSigma \btheta_k=\frac{1}{2}\sum_{l,m=1}^p\theta_{kl}\theta_{km}\hat\sigma_{lm}\\
&=&\frac{1}{2}\theta_{kj}^2\hat\sigma_{jj}+\frac{1}{2}\sum_{l\ne j}\theta_{kl}\theta_{kj}\hat\sigma_{lj}+\frac{1}{2}\sum_{m\ne j}\theta_{kj}\theta_{km}\hat\sigma_{jm}+\frac{1}{2}\sum_{l\ne j, m\ne j}\theta_{kl}\theta_{km}\hat\sigma_{lm}\\
\end{eqnarray}

Because $\hat\sigma_{lj}=\hat\sigma_{jl}$, we have $\sum_{l\ne j}\theta_{kl}\theta_{kj}\hat\sigma_{lj}=\sum_{m\ne j}\theta_{kj}\theta_{km}\hat\sigma_{jm}$. It follows that
\begin{eqnarray}
\frac{1}{2}\btheta^\T_k \hat \bSigma \btheta_k&=&\frac{1}{2}\theta_{kj}^2\hat\sigma_{jj}+\sum_{l\ne j}\theta_{kj}\theta_{kl}\hat\sigma_{lj}+\frac{1}{2}\sum_{l\ne j, m\ne j}\theta_{kl}\theta_{km}\hat\sigma_{lm}\label{eq1:lemma1}
\end{eqnarray}

Then recall that $\hat \bdelta^k=\hat\bmu_k-\hat\bmu_1$. We have 
\begin{equation}\label{eq2:lemma1}
(\hat \bmu_k-\hat \bmu_1)^{\T} \btheta_k=\sum_{l=1}^p\delta^k_l\theta_{kl}=\delta^k_j\theta_{kj}+\sum_{l\ne j}\delta^k_l\theta_{kl}
\end{equation}

Combine \eqref{eq1:lemma1} and \eqref{eq2:lemma1} and we have
\begin{eqnarray}
&&\frac{1}{2}\btheta^\T_k \hat \bSigma \btheta_k-(\hat \bmu_k-\hat \bmu_1)^{\T} \btheta_k\\
&=&\frac{1}{2}\theta_{kj}^2\hat\sigma_{jj}+\sum_{l\ne j}\theta_{kj}\theta_{kl}\hat\sigma_{lj}+\frac{1}{2}\sum_{l\ne j, m\ne j}\theta_{kl}\theta_{km}\hat\sigma_{lm}-\delta^k_j\theta_{kj}-\sum_{l\ne j}\delta^k_l\theta_{kl}\\
&=&\frac{1}{2}\theta^2_{kj}\hat \sigma_{jj}+(\sum_{ l \neq j }\hat \sigma_{l,j} \theta_{kl}-
\hat \delta^{k}_{j})\theta_{kj}+\frac{1}{2} \sum_{m \neq j, l \neq j} \theta_{kl}\theta_{km}\hat \sigma_{lm}- \sum_{l \neq j}\hat \delta^{k}_{l}\theta_{kl}
\end{eqnarray}

Note that the last two terms does not involve $\btheta_{.j}$. Therefore, given $\{\btheta_{.j'}, j' \neq j\}$, the solution of $\btheta_{.j}$ is defined as
\begin{eqnarray*}
\arg\min_{\btheta_{2,j},\ldots,\btheta_{K,j}} \sum^{K}_{k=2} \{
\frac{1}{2}\theta^2_{kj}\hat \sigma_{jj}+(\sum_{ l \neq j }\hat \sigma_{lj} \theta_{kl}-
\hat \delta^{k}_{j})\theta_{kj}\}+\lambda  \Vert \btheta_{.j}\Vert,
\end{eqnarray*}
which is equivalent to \eqref{bd1}. It is easy to get \eqref{bd2} from \eqref{bd1} \citep{groupedlasso}.
\end{proof}

In what follows we use $C$ to denote a generic constant for convenience.

Now we define an oracle ``estimator" that relies on the knowledge of $\cal D$ for a specific tuning parameter $\lambda$:
\beq\label{formula:oracle}
\hat\btheta_{\cal D}^{\mathrm{oracle}}=\arg\min_{\btheta_{2,\cal D},\ldots,\btheta_{K,\cal D}} \sum^{K}_{k=2}\{\frac{1}{2}\btheta_{k,\cal D}^\T \hat \bSigma_{\cal D, \cal D} \btheta_{k,\cal D}-(\hat \bmu_{k,\cal D}-\hat \bmu_{1,\cal D})^{\T} \btheta_{k,\cal D}\}+\lambda\sum_{j\in\cal D}  \Vert \theta_{.j}\Vert .
\eeq

The proof of Theorem~\ref{asy} is based on a series of technical lemmas.

\begin{lemma}\label{KKT}
Define $\hat\btheta_{\cal D}^{\mathrm{oracle}}(\lambda)$ as in \eqref{formula:oracle}. Then $\hat\btheta_k=(\hat\btheta_{k,\cal D}^{\mathrm{oracle}},0),k=2,\ldots, K$ is the solution to \eqref{estimator2} if
\bea\label{KKT1}
\max_{j \in {\cal D}^c}[\sum_{k=2}^K \{(\hat\bSigma_{\cal D^C, \cal D}\hat \btheta_{k,\cal D}^{\mathrm{(oracle)}})_j-(\hat \mu_{kj}-\hat\mu_{1j})\}^2]^{1/2}<\lambda.
\eea
\end{lemma}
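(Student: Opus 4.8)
The plan is to exploit convexity. The objective in \eqref{estimator2}, call it $F(\btheta_2,\ldots,\btheta_K)$, is convex: each quadratic form $\frac{1}{2}\btheta_k^\T\hat\bSigma\btheta_k$ is convex because $\hat\bSigma$ is positive semidefinite (a sum of outer products), and the group-lasso term $\lambda\sum_j\Vert\btheta_{\cdot j}\Vert$ is a sum of norms, hence convex. For a convex objective the subgradient condition $\mathbf{0}\in\partial F$ is both necessary and sufficient for a global minimum, so it suffices to verify that the padded candidate $\hat\btheta_k=(\hat\btheta_{k,{\cal D}}^{\mathrm{oracle}},0)$ satisfies it. First I would record the subgradient grouped by coordinate $j$: the smooth part contributes $g_{kj}(\btheta)=(\hat\bSigma\btheta_k)_j-(\hat\mu_{kj}-\hat\mu_{1j})$, while the penalty contributes $\lambda\,\btheta_{\cdot j}/\Vert\btheta_{\cdot j}\Vert$ when $\btheta_{\cdot j}\neq 0$ and a vector of Euclidean norm at most $\lambda$ when $\btheta_{\cdot j}=0$. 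Thus optimality splits into an equality block on the active groups and an inequality block $[\sum_{k=2}^K g_{kj}^2]^{1/2}\le\lambda$ on the inactive groups.

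For $j\in{\cal D}$ the conditions hold automatically. Since the candidate vanishes off ${\cal D}$, for every $j\in{\cal D}$ we have $(\hat\bSigma\hat\btheta_k)_j=(\hat\bSigma_{{\cal D},{\cal D}}\hat\btheta_{k,{\cal D}}^{\mathrm{oracle}})_j$, so the gradient block $g_{kj}$ of the full problem coincides exactly with the corresponding gradient block of the restricted problem \eqref{formula:oracle}. As $\hat\btheta_{{\cal D}}^{\mathrm{oracle}}$ is by definition the minimizer of that restricted convex problem, it satisfies the restricted subgradient conditions on ${\cal D}$ (an equality for each active group, and the norm inequality for any group that happens to vanish), and these transfer verbatim to the full problem. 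Hence the candidate meets the optimality conditions for all $j\in{\cal D}$ with no further work.

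For $j\in{\cal D}^c$ the candidate already has $\btheta_{\cdot j}=0$, so the requirement is the inequality block $[\sum_{k=2}^K g_{kj}^2]^{1/2}\le\lambda$. Using again that the candidate vanishes off ${\cal D}$, we get $g_{kj}=(\hat\bSigma_{{\cal D}^C,{\cal D}}\hat\btheta_{k,{\cal D}}^{\mathrm{oracle}})_j-(\hat\mu_{kj}-\hat\mu_{1j})$, so this is precisely the hypothesis \eqref{KKT1} (the strict bound there implies the nonstrict subgradient bound needed for optimality). Combining the two blocks gives $\mathbf{0}\in\partial F$ at the candidate, which by convexity proves it solves \eqref{estimator2}. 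The only delicate point is the bookkeeping that identifies the full-problem gradient blocks with the restricted ones after zero-padding and the correct handling of the vanishing-group subgradient; once that identification is made, the argument is a direct verification of the group-lasso Karush--Kuhn--Tucker conditions, so I do not expect a genuine obstacle here.
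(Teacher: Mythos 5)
Your proposal is correct and is exactly the paper's approach: the paper's proof is the one-line remark that one checks the KKT (subgradient) conditions of \eqref{estimator2} at the zero-padded oracle solution, which is precisely the verification you carry out in detail. Your expanded bookkeeping—identifying the restricted gradient blocks on ${\cal D}$ and invoking \eqref{KKT1} for the inactive groups—is a faithful, fleshed-out version of that same argument.
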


\begin{proof}[Proof of Lemma~\ref{KKT}]
The proof is completed by checking that $\hat\btheta_k=(\hat\btheta_{k,\cal D}^{\mathrm{oracle}}(\lambda),0)$ satisfies the KKT condition of \eqref{estimator2}.
\end{proof}

\begin{lemma}\label{muD}
For each $k$,
$
\bSigma_{\cal D^C,\cal D}\bSigma^{-1}_{\cal D,\cal D}(\bmu_{k,\cal D}-\bmu_{1,\cal D})=\bmu_{k,\cal D^C}-\bmu_{1,\cal D^C}.
$
\end{lemma}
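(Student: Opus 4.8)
The plan is to read the identity straight off the explicit form of the Bayes directions. Recall that $\btheta_k^{\mathrm{Bayes}}=\bSigma^{-1}(\bmu_k-\bmu_1)$, equivalently the linear system $\bSigma\,\btheta_k^{\mathrm{Bayes}}=\bmu_k-\bmu_1$. The single conceptual ingredient is the defining property of $\cal D$: since $\cal D$ collects exactly those indices $j$ for which condition~\eqref{condition2} fails, every coordinate of $\btheta_k^{\mathrm{Bayes}}$ lying in $\cal D^C$ vanishes, i.e. $\btheta_{k,\cal D^C}^{\mathrm{Bayes}}=0$ for each $k=2,\ldots,K$. Everything after this is block-matrix bookkeeping.

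First I would partition the system $\bSigma\,\btheta_k^{\mathrm{Bayes}}=\bmu_k-\bmu_1$ according to the split $\{\cal D,\cal D^C\}$. Using $\btheta_{k,\cal D^C}^{\mathrm{Bayes}}=0$, the $\cal D$-block reduces to $\bSigma_{\cal D,\cal D}\,\btheta_{k,\cal D}^{\mathrm{Bayes}}=\bmu_{k,\cal D}-\bmu_{1,\cal D}$, which, after inverting $\bSigma_{\cal D,\cal D}$, yields $\btheta_{k,\cal D}^{\mathrm{Bayes}}=\bSigma_{\cal D,\cal D}^{-1}(\bmu_{k,\cal D}-\bmu_{1,\cal D})$. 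The $\cal D^C$-block reduces to $\bSigma_{\cal D^C,\cal D}\,\btheta_{k,\cal D}^{\mathrm{Bayes}}=\bmu_{k,\cal D^C}-\bmu_{1,\cal D^C}$. Substituting the first expression into the second gives exactly $\bSigma_{\cal D^C,\cal D}\bSigma_{\cal D,\cal D}^{-1}(\bmu_{k,\cal D}-\bmu_{1,\cal D})=\bmu_{k,\cal D^C}-\bmu_{1,\cal D^C}$, which is the claim, holding for each $k$ separately as the lemma states.

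There is essentially no obstacle here; the result is just a restatement of the common sparsity structure in terms of $\bSigma$ and the mean differences. The only point worth stating carefully is the invertibility of $\bSigma_{\cal D,\cal D}$, which is immediate because $\bSigma$ is positive definite and any principal submatrix of a positive definite matrix is again positive definite, so $\btheta_{k,\cal D}^{\mathrm{Bayes}}$ in the first block is well defined.
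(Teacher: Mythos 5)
Your proof is correct, and it reaches the identity by a slightly different, more elementary route than the paper. Both arguments rest on the same key fact, namely that the definition of ${\cal D}$ forces $\btheta_{k,{\cal D}^C}=0$ for every $k$; the difference is in the block-matrix step that follows. The paper invokes the block-inversion (Schur complement) formula for $\bSigma^{-1}$ to write
$\btheta_{k,{\cal D}^C}=-(\bSigma_{{\cal D}^C,{\cal D}^C}-\bSigma_{{\cal D}^C,{\cal D}}\bSigma_{{\cal D},{\cal D}}^{-1}\bSigma_{{\cal D},{\cal D}^C})^{-1}\{\bSigma_{{\cal D}^C,{\cal D}}\bSigma_{{\cal D},{\cal D}}^{-1}(\bmu_{k,{\cal D}}-\bmu_{1,{\cal D}})-(\bmu_{k,{\cal D}^C}-\bmu_{1,{\cal D}^C})\}$,
and concludes from the nonsingularity of the Schur complement that the bracketed vector vanishes. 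You instead multiply through by $\bSigma$: you partition the forward system $\bSigma\btheta_k^{\mathrm{Bayes}}=\bmu_k-\bmu_1$ into its ${\cal D}$ and ${\cal D}^C$ block rows, use $\btheta_{k,{\cal D}^C}^{\mathrm{Bayes}}=0$ in both, and substitute the ${\cal D}$-row solution into the ${\cal D}^C$ row. The two arguments are mathematically equivalent, but yours needs only the invertibility of the principal submatrix $\bSigma_{{\cal D},{\cal D}}$ (which you correctly justify by positive definiteness), whereas the paper's additionally relies on the block-inverse formula and the nonsingularity of the Schur complement; your version also sidesteps exactly the kind of bookkeeping slip visible in the paper's displayed formula, where the inverse on the middle factor $\bSigma_{{\cal D},{\cal D}}$ of the Schur complement is missing. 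Either way, the lemma is, as you note, just the common sparsity structure read off against $\bSigma$.
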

\begin{proof}[Proof of Lemma~\ref{muD}]
For each $k$, we have $\btheta_{k,\cal D^C}=0$. By definition, $\btheta_{\cal D^C}=(\bSigma^{-1}(\bmu_k-\bmu_1))_{\cal D^C}$. Then by block inversion, we have that
\beqn
\btheta_{k,\cal D^C}=-(\bSigma_{\cal D^C,\cal D^C}-\bSigma_{\cal D^C,\cal D}\bSigma_{\cal D,\cal D}\bSigma_{\cal D,\cal D^C})^{-1}(\bSigma_{\cal D^C,\cal D}\bSigma_{\cal D,\cal D}^{-1}(\bmu_{k,\cal D}-\bmu_{1,\cal D})-(\bmu_{k,\cal D^C}-\bmu_{1,\cal D^C})),
\eeqn
and the conclusion follows.
\end{proof}

\begin{proposition}\label{concentration}
There exist a constant
$\epsilon_0 $ such that for any $\epsilon \leq
\epsilon_0$ we have
\begin{eqnarray}
&& \pr\{|(\hat \mu_{kj}-\hat
\mu_{1j})-(\mu_{kj}-\mu_{1j})| \ge \epsilon\} \le C\exp(-C\dfrac{n\epsilon^2}{K})+C\exp(-\dfrac{Cn}{K^2}),\label{lamma1.eq2}\\\
&& k=2,\ldots, K, \ j=1,\ldots,p; \nonumber \\
&&\pr(|\hat \sigma_{ij}-\sigma_{ij}| \ge \epsilon) \le 2\exp(-C\dfrac{n\epsilon^2}{K})+2\exp(-\dfrac{Cn}{K^2}),\ i,j=1,\ldots,p. \label{lamma1.eq1}
\eea
\end{proposition}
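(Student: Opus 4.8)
The plan is to decompose each bound into two independent sources of error: the fluctuation of the random class counts $n_k=\sum_{i=1}^n \mathrm{1}(Y^i=k)$ around $n\pi_k$, and the fluctuation of the relevant sample statistic conditional on the class labels. The first source produces the term $C\exp(-Cn/K^2)$ and the second produces $C\exp(-Cn\epsilon^2/K)$. To handle the counts I would first introduce the event $G=\bigcap_{k=1}^K\{n_k\ge c n/K\}$ for a small constant $c$. Since each $n_k$ is a sum of $n$ i.i.d.\ $\mathrm{Bernoulli}(\pi_k)$ indicators with $\pi_k\ge c_1/K$ by (C1), Hoeffding's inequality applied to the centered indicators gives $\pr(n_k< n\pi_k/2)\le \exp(-Cn\pi_k^2)\le \exp(-Cn/K^2)$, and a union bound over the $K$ classes (whose factor $K$ is a constant since $K$ is fixed) yields $\pr(G^c)\le C\exp(-Cn/K^2)$.

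For the mean bound I would argue conditionally on the labels. Given the labels, the observations within class $k$ are i.i.d.\ $N(\bmu_k,\bSigma)$, so $\hat\mu_{kj}-\mu_{kj}$ is $N(0,\sigma_{jj}/n_k)$. On $G$ we have $n_k\ge cn/K$, so the standard Gaussian tail bound gives $\pr(|\hat\mu_{kj}-\mu_{kj}|\ge \epsilon/2\mid \text{labels})\le 2\exp(-C n_k\epsilon^2/\sigma_{jj})\le 2\exp(-Cn\epsilon^2/K)$, using that $\max_j\sigma_{jj}$ is bounded. Integrating this conditional bound over the labels restricted to $G$ and adding $\pr(G^c)$ controls $|\hat\mu_{kj}-\mu_{kj}|$; the target quantity $(\hat\mu_{kj}-\hat\mu_{1j})-(\mu_{kj}-\mu_{1j})$ is then split by the triangle inequality into the $k$ and the $1$ deviations, each bounded at level $\epsilon/2$, which after absorbing constants gives the first displayed inequality.

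For the covariance bound the key simplification is polarization: since $\hat\sigma_{ij}=\tfrac14(\hat\tau_+^2-\hat\tau_-^2)$ and $\sigma_{ij}=\tfrac14(\tau_+^2-\tau_-^2)$, where $\hat\tau_\pm^2,\tau_\pm^2$ are the pooled sample and population variances of $X_i\pm X_j$, it suffices to prove concentration of a pooled within-class sample variance $\hat\tau^2$ of a single Gaussian linear combination $U$ with variance $\tau^2=O(1)$. After recentering $X^l-\hat\mu_k=Z^l-\bar Z_k$ with $Z^l=X^l-\mu_{Y^l}$, conditional on the labels (with every $n_k\ge 1$) the pooled residual sum of squares satisfies $(n-K)\hat\tau^2/\tau^2\sim\chi^2_{n-K}$ exactly, since the degrees of freedom $\sum_k(n_k-1)=n-K$ do not depend on how the labels split. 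A chi-square concentration inequality (Laurent--Massart, equivalently Bernstein for sub-exponential variables) then gives $\pr(|\hat\tau^2-\tau^2|\ge \epsilon)\le 2\exp(-Cn\epsilon^2)$ for all $\epsilon\le\epsilon_0$, the threshold $\epsilon_0$ being exactly what keeps us in the sub-Gaussian regime of the sub-exponential tail. Since $n\ge n/K$ this is even stronger than the stated $\exp(-Cn\epsilon^2/K)$; adding $\pr(\text{some }n_k=0)\le CK\exp(-Cn/K)\le C\exp(-Cn/K^2)$ to cover degenerate label vectors yields the second inequality.

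The main obstacle is the bookkeeping around the random class sizes: the centering by $\hat\mu_k$ inside $\hat\bSigma$ couples the mean and covariance estimators, and $n_k$ enters both the variance proxy of the Gaussian means and the degrees of freedom of the pooled covariance. The device that resolves this cleanly is to condition on the full label vector $(Y^1,\dots,Y^n)$, under which all the estimators become ordinary Gaussian or chi-square objects with known label-dependent parameters, and to absorb the remaining randomness through the single good event $G$. The restriction to fixed $K$ (Remark 2) is what allows the $K$-fold union bounds to be swallowed by the generic constant $C$, and the restriction $\epsilon\le\epsilon_0$ is inherited entirely from the sub-exponential (covariance) part of the argument.
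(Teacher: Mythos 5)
Your proof is correct, and while the mean part follows the paper's route, the covariance part takes a genuinely different one. For the means, the paper does exactly what you do, only more tersely: it conditions on the label vector, applies a Chernoff/Gaussian tail bound with variance proxy scaling as $1/n_k$, and then integrates over the labels, splitting into the good event where $n_k$ is of order $n/K$ (Bernoulli concentration, giving the $C\exp(-Cn/K^2)$ term) and its complement; your explicit event $G$ and triangle-inequality split is the same argument spelled out. For the covariances, the paper instead decomposes $\hat\sigma_{ij}=\hat\sigma^{(0)}_{ij}-\frac{1}{n-K}\sum_k n_k(\hat\mu_{ki}-\mu_{ki})(\hat\mu_{kj}-\mu_{kj})$, where $\hat\sigma^{(0)}_{ij}$ is centered at the \emph{true} class means; it bounds $\hat\sigma^{(0)}_{ij}$ by a sub-exponential Chernoff bound, $\pr(|\hat\sigma^{(0)}_{ij}-\sigma_{ij}|\ge\epsilon)\le C\exp(-Cn\epsilon^2)$, and controls the correction term by recycling the mean bounds, which is how the $K$-dependent terms re-enter its covariance inequality. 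You avoid that cross term entirely: polarization reduces $\hat\sigma_{ij}$ to pooled within-class sample variances of $X_i\pm X_j$, and conditional on any label vector with all $n_k\ge 1$ the pooled residual sum of squares is exactly $\tau^2\chi^2_{n-K}$, so Laurent--Massart gives $2\exp(-Cn\epsilon^2)$ in one step --- the mean estimation is absorbed into the degrees of freedom rather than appearing as an error term. Your route is cleaner and slightly sharper (no $1/K$ in the covariance exponent beyond the degenerate-label event), but it leans on exact Gaussianity of the within-class distribution, whereas the paper's decomposition only needs sub-exponential behavior of the products and so would survive relaxations of the normality assumption. Both arguments share the same implicit assumption, unstated in the paper as well, that $\max_j\sigma_{jj}$ is bounded so that the generic constants are uniform in $j$.
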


\begin{proof}[Proof of Proposition~\ref{concentration}]
We first show \eqref{lamma1.eq2}. Note that, by Chernoff bound
\bean
\pr(|\hat\mu_{kj}-\mu_{kj}|\ge\epsilon)&\le&E(\pr(|\hat\mu_{kj}-\mu_{kj}|\ge\epsilon\mid Y))\le E(C\exp(-Cn_k\epsilon^2))\\
&\le&2\exp(-C\dfrac{n\epsilon^2}{K})+2\exp(-\dfrac{Cn}{K^2}).
\eean
A similar inequality holds for $\hat\mu_{1j}$, and \eqref{lamma1.eq2} follows.

For \eqref{lamma1.eq1}, note that
\bean
\hat \sigma_{ij}&=&\dfrac{1}{n-K}\sum_{k=1}^K\sum_{Y^m=k}(X^m_i-\hat\mu_{ki})(X^m_j-\hat\mu_{kj})\\
&=&\dfrac{1}{n-K}\sum_{k=1}^K\sum_{Y^m=k}(X^m_i-\mu_i^m)(X^m_j-\mu_j^m)+\dfrac{1}{n-K}\sum_{k=1}^K n_k(\hat\mu_{ki}-\mu_{ki})(\hat\mu_{kj}-\mu_{kj}) \quad \\
&=&\hat\sigma_{ij}^{(0)}+\dfrac{1}{n-K}\sum_{k=1}^K n_k(\hat\mu_{ki}-\mu_{ki})(\hat\mu_{kj}-\mu_{kj}).
\eean
Now by Chernoff bound,
$
\pr(|\hat\sigma_{ij}^{(0)}-\sigma_{ij}|\ge\epsilon)\le C\exp(-Cn\epsilon^2).
$
Combining this fact with \eqref{lamma1.eq2}, we have the desired result.

\end{proof}

Now we consider two events depending on a small $\epsilon>0$:
\bean
A(\epsilon)&=&\{|\hat \sigma_{ij}-\sigma_{ij}|<\dfrac{\epsilon}{d}\mbox{ for any $i=1,\cdots,p$ and $j\in\cal D$}\},\\
B(\epsilon)&=&\{|(\hat \mu_{kj}-\hat
\mu_{1j})-(\mu_{kj}-\mu_{1j})|< \epsilon\mbox{ for any $k$ and $j$}\}.
\eean

By simple union bounds, we can derive Lemma 4 and Lemma 5.
\begin{lemma}\label{prob.bounds}
There exist a constant
$\epsilon_0 $ such that for any $\epsilon \leq
\epsilon_0$ we have
\begin{enumerate}
\item $\pr(A(\epsilon))\ge 1-Cpd \exp(-Cn\dfrac{\epsilon^2}{Kd^2})-CK\exp(-\dfrac{Cn}{K^2})$;
\item $\pr(B(\epsilon))\ge 1-Cp(K-1)\exp(-C\dfrac{n\epsilon^2}{K})-CK\exp(-\dfrac{Cn}{K^2})$;
\item $\pr(A(\epsilon)\cap B(\epsilon)) \ge 1-\gamma(\epsilon)$,
where
\beqn\label{delta}
\gamma(\epsilon)=Cpd\exp(-C\dfrac{n\epsilon^2}{d^2})+Cp(K-1)\exp(-C\dfrac{n\epsilon^2}{K})+2CK\exp(-\dfrac{Cn}{K^2}).
\eeqn
\end{enumerate}
\end{lemma}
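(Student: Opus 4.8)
The plan is to derive all three probability bounds from the two concentration inequalities of Proposition~\ref{concentration} by union bounds, taking care to account separately for the two kinds of tail terms that appear there. Each bound in \eqref{lamma1.eq2} and \eqref{lamma1.eq1} splits into a ``signal'' term decaying like $\exp(-Cn\epsilon^2/K)$ and a ``class-size'' term $\exp(-Cn/K^2)$; inspecting the proof of Proposition~\ref{concentration}, the latter comes solely from the event that some $n_k$ drops far below its mean $n\pi_k\ge c_1 n/K$, and so does not depend on the coordinate indices.

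First I would isolate this shared event. Fix a suitable constant $c$ and set $\mathcal E=\{n_k\ge cn/K\text{ for all }k\}$; the Chernoff bound on the class counts used inside Proposition~\ref{concentration}, together with a union bound over the $K$ classes, gives $\pr(\mathcal E^c)\le CK\exp(-Cn/K^2)$. On $\mathcal E$, the conditional deviations of $\hat\sigma_{ij}$ and of $\hat\mu_{kj}-\hat\mu_{1j}$ obey only the signal part of the respective bounds.

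For part (1), I would invoke \eqref{lamma1.eq1} with threshold $\epsilon/d$ in place of $\epsilon$, producing the exponent $-Cn\epsilon^2/(Kd^2)$, and union bound over the $pd$ pairs $(i,j)$ with $i\in\{1,\dots,p\}$ and $j\in\mathcal D$. Writing $\pr(A(\epsilon)^c)\le\pr(\mathcal E^c)+\pr(A(\epsilon)^c\cap\mathcal E)$ then yields the stated bound. Part (2) is the same argument with \eqref{lamma1.eq2} at threshold $\epsilon$, union bounding over the $p(K-1)$ pairs $(k,j)$ with $k\in\{2,\dots,K\}$ and adding $\pr(\mathcal E^c)$ once. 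Part (3) follows from $\pr((A(\epsilon)\cap B(\epsilon))^c)\le\pr(A(\epsilon)^c)+\pr(B(\epsilon)^c)$ and summing; the two class-size terms combine into $2CK\exp(-Cn/K^2)$, and since $K$ is held fixed the factor $K$ in $-Cn\epsilon^2/(Kd^2)$ is absorbed into the generic constant, matching $\gamma(\epsilon)$.

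The only delicate point — and the reason the stated bounds beat a naive union bound — is this bookkeeping of the class-size failure. A careless application would multiply the term $\exp(-Cn/K^2)$ by $pd$ or $p(K-1)$, whereas treating $\mathcal E$ as a single event shared across all coordinates makes it contribute only a factor $K$. I expect this separation to be the main thing to get right; everything else is routine substitution into Proposition~\ref{concentration} followed by counting the coordinate pairs.
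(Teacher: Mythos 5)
Your proposal is correct, and it is the right way to flesh out what the paper dismisses in a single sentence: the paper's entire justification is ``By simple union bounds, we can derive Lemma 4 and Lemma 5,'' with no further detail. The point you flag as delicate is in fact the one place where a literal reading of that sentence fails: applying Proposition~\ref{concentration} coordinate-by-coordinate and union bounding naively would produce $Cpd\exp(-Cn/K^2)$ (resp.\ $Cp(K-1)\exp(-Cn/K^2)$) for the class-size term, not the $CK\exp(-Cn/K^2)$ that the lemma asserts, and this discrepancy cannot be absorbed into generic constants since $p$ grows. Your decomposition --- isolating the single $Y$-measurable event $\mathcal{E}=\{n_k\ge cn/K \text{ for all } k\}$ with $\pr(\mathcal{E}^c)\le CK\exp(-Cn/K^2)$, then union bounding only the conditional sub-Gaussian tails over the $pd$ (resp.\ $p(K-1)$) coordinate pairs --- is exactly the bookkeeping needed to obtain the stated finite-sample bound, which matters because part 1 of Theorem~\ref{asy} invokes this lemma under conditions (C0)--(C1) alone, without the growth condition (C2) that could otherwise rescue the cruder bound asymptotically. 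Your handling of the threshold $\epsilon/d$ for part (1), the coordinate counts, the union for part (3), and the absorption of the fixed $K$ into the constant in $\gamma(\epsilon)$ all match the intended argument.
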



\begin{lemma}\label{lemma1}
Assume that both $A(\epsilon)$ and $B(\epsilon)$ have occurred. We have the following conclusions:
\begin{eqnarray*}
&&\Vert \hat\bSigma_{\cal D, \cal D}-\bSigma_{\cal D, \cal D} \Vert_{\infty}< \epsilon;\label{lamma1.eq3}\\
&&\Vert \hat\bSigma_{\cal D^C, \cal D}-\bSigma_{\cal D^C, \cal D} \Vert_{\infty} < \epsilon;\\
&&\Vert(\hat\bmu_{k}-\hat\bmu_{1})-(\bmu_k-\bmu_1) \Vert_{\infty}< \epsilon;\label{lamma1.eq5}\\
&&\Vert(\hat\bmu_{k,\cal D}-\hat\bmu_{1,\cal D})-(\bmu_{k,\cal D}-\bmu_{1,\cal D}) \Vert_{1}< \epsilon.\label{lamma1.eq6}
\end{eqnarray*}
\end{lemma}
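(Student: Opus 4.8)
The plan is to obtain all four inequalities as purely deterministic consequences of the two events $A(\epsilon)$ and $B(\epsilon)$, by translating the entrywise bounds they supply into the relevant norms: the induced matrix norm $\vinf{\cdot}$ (the maximum absolute row sum), the vector sup norm $\vinf{\cdot}$, and the vector norm $\vone{\cdot}$. Since Lemma~\ref{prob.bounds} already controls $\pr(A(\epsilon))$ and $\pr(B(\epsilon))$, no probabilistic argument is left to make here; the entire content is bookkeeping between these norms, carried out on the event $A(\epsilon)\cap B(\epsilon)$ where by hypothesis both events occur.

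For the two covariance bounds I would argue one row at a time. Fix any row index $i$; the corresponding row of $\hat\bSigma_{\cal D,\cal D}-\bSigma_{\cal D,\cal D}$, respectively of $\hat\bSigma_{\cal D^C,\cal D}-\bSigma_{\cal D^C,\cal D}$, has exactly $d$ entries, indexed by the columns $j\in{\cal D}$. The event $A(\epsilon)$ bounds $|\hat\sigma_{ij}-\sigma_{ij}|<\epsilon/d$ for every $i=1,\dots,p$ and every $j\in{\cal D}$, so its index range simultaneously covers the ${\cal D}\times{\cal D}$ block and the ${\cal D}^C\times{\cal D}$ block. Summing the $d$ entrywise bounds within a single row gives an absolute row sum strictly below $d\cdot(\epsilon/d)=\epsilon$, uniformly in $i$; taking the maximum over rows yields $\vinf{\hat\bSigma_{\cal D,\cal D}-\bSigma_{\cal D,\cal D}}<\epsilon$ and $\vinf{\hat\bSigma_{\cal D^C,\cal D}-\bSigma_{\cal D^C,\cal D}}<\epsilon$. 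This is exactly why the $1/d$ scaling was built into the definition of $A(\epsilon)$.

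For the two mean bounds I would read off $B(\epsilon)$ directly. The event $B(\epsilon)$ bounds each of the $p$ coordinates of $(\hat\bmu_k-\hat\bmu_1)-(\bmu_k-\bmu_1)$ by $\epsilon$, so its largest coordinate in absolute value, namely $\vinf{(\hat\bmu_k-\hat\bmu_1)-(\bmu_k-\bmu_1)}$, is below $\epsilon$, which is the third line. For the last line I would restrict attention to the coordinates $j\in{\cal D}$ and sum the corresponding coordinatewise bounds furnished by $B(\epsilon)$ over those $d$ indices, which delivers the claimed bound on $\vone{(\hat\bmu_{k,\cal D}-\hat\bmu_{1,\cal D})-(\bmu_{k,\cal D}-\bmu_{1,\cal D})}$.

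The only point requiring care, and the nearest thing to an obstacle, is the matching of constants: one must keep track of how many summands each induced norm involves ($d$ entries per row for the two covariance blocks, $d$ coordinates for the restricted mean difference) and pair this count against the $\epsilon/d$ versus $\epsilon$ scalings appearing in $A(\epsilon)$ and $B(\epsilon)$ respectively. Once the index ranges in the two events are recognized to cover precisely the blocks and coordinates named in the four claims, each inequality follows immediately and deterministically, with no further estimation needed.
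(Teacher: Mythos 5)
Your overall strategy---treating all four claims as deterministic norm bookkeeping on the event $A(\epsilon)\cap B(\epsilon)$---is exactly what the paper intends: it gives no separate proof of this lemma, asserting it immediately after defining the two events, and your arguments for the first three inequalities are correct. The index range of $A(\epsilon)$ (all $i=1,\ldots,p$, $j\in{\cal D}$) indeed covers both the ${\cal D},{\cal D}$ block and the ${\cal D}^C,{\cal D}$ block, each row of either block has exactly $d$ entries bounded by $\epsilon/d$, so every row sum is below $\epsilon$; and the sup-norm claim for the mean differences is a direct restatement of $B(\epsilon)$.

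Your proof of the fourth inequality, however, fails on precisely the constant-matching point you yourself flagged as the delicate step. The event $B(\epsilon)$ supplies a per-coordinate bound of $\epsilon$, not $\epsilon/d$, so summing over the $d$ coordinates $j\in{\cal D}$ gives
\begin{equation*}
\Vert(\hat\bmu_{k,\cal D}-\hat\bmu_{1,\cal D})-(\bmu_{k,\cal D}-\bmu_{1,\cal D})\Vert_{1}< d\epsilon,
\end{equation*}
not $\epsilon$ as you claim; the assertion that this summation ``delivers the claimed bound'' is false, and with $A(\epsilon)$ and $B(\epsilon)$ as defined the fourth conclusion is off by a factor of $d$. (This mismatch is actually present in the paper's own statement: the later proof of Lemma~\ref{uporacle} uses the $\ell_1$ bound $\epsilon$, so the intended repair is to strengthen $B(\epsilon)$ to require $|(\hat\mu_{kj}-\hat\mu_{1j})-(\mu_{kj}-\mu_{1j})|<\epsilon/d$ for $j\in{\cal D}$, at the cost of replacing $\exp(-Cn\epsilon^2/K)$ by $\exp(-Cn\epsilon^2/(Kd^2))$ in the corresponding probability bound of Lemma~\ref{prob.bounds}, rather than to weaken the conclusion to $d\epsilon$.) A complete proof must either make that modification explicit or carry the extra factor of $d$; as written, your argument has a genuine gap at this step.
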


\begin{lemma}\label{lemma2}
If both $A(\epsilon)$ and $B(\epsilon)$ have occurred for $\epsilon<\dfrac{1}{\varphi}$, we have
\bean\label{siginv}
&&\vone{\hat \bSigma_{\cal D,\cal D}^{-1}-\bSigma_{\cal D,\cal D}^{-1}}
< \epsilon \varphi^2(1-\varphi \epsilon)^{-1},\\
&&\Vert
\hat\bSigma_{\cal D^C,\cal D}(\hat \bSigma_{\cal D,\cal D})^{-1}-\bSigma_{\cal D^C,\cal D}(\bSigma_{\cal D,\cal D})^{-1}\Vert_{\infty}< \dfrac{\varphi\epsilon}{1-\varphi\epsilon}.
\eean

\end{lemma}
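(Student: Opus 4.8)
The plan is to treat both displays as standard matrix perturbation bounds, reducing everything to the entrywise control supplied by Lemma~\ref{lemma1} together with the definition of $\varphi$. The one structural fact I would exploit throughout is that $\bSigma_{\cal D,\cal D}$, its inverse, and the perturbation are all symmetric, so that $\vone{\cdot}$ and $\vinf{\cdot}$ coincide on them; this lets me move freely between the max-column-sum norm (natural for the first display) and the max-row-sum norm (natural for the second). Writing $E=\hat\bSigma_{\cal D,\cal D}-\bSigma_{\cal D,\cal D}$, Lemma~\ref{lemma1} gives $\vone{E}=\vinf{E}<\epsilon$, and the definition of $\varphi$ gives $\vone{\bSigma_{\cal D,\cal D}^{-1}}=\vinf{\bSigma_{\cal D,\cal D}^{-1}}\le\varphi$.

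For the first inequality I would factor $\hat\bSigma_{\cal D,\cal D}=\bSigma_{\cal D,\cal D}(\bI+\bSigma_{\cal D,\cal D}^{-1}E)$. Since $\vone{\bSigma_{\cal D,\cal D}^{-1}E}\le\vone{\bSigma_{\cal D,\cal D}^{-1}}\vone{E}<\varphi\epsilon<1$ under the hypothesis $\epsilon<1/\varphi$, the Neumann series for $(\bI+\bSigma_{\cal D,\cal D}^{-1}E)^{-1}$ converges and yields $\hat\bSigma_{\cal D,\cal D}^{-1}=\sum_{m\ge 0}(-\bSigma_{\cal D,\cal D}^{-1}E)^m\bSigma_{\cal D,\cal D}^{-1}$. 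Subtracting $\bSigma_{\cal D,\cal D}^{-1}$ removes the $m=0$ term, and bounding the remaining geometric series by submultiplicativity of the operator norm gives $\vone{\hat\bSigma_{\cal D,\cal D}^{-1}-\bSigma_{\cal D,\cal D}^{-1}}\le\varphi\sum_{m\ge 1}(\varphi\epsilon)^m=\epsilon\varphi^2(1-\varphi\epsilon)^{-1}$, which is the claim. The same series also delivers the auxiliary bound $\vinf{\hat\bSigma_{\cal D,\cal D}^{-1}}=\vone{\hat\bSigma_{\cal D,\cal D}^{-1}}\le\varphi(1-\varphi\epsilon)^{-1}$, which I would need below.

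For the second inequality I would set $\hat P=\hat\bSigma_{\cal D^C,\cal D}$, $P=\bSigma_{\cal D^C,\cal D}$, $\hat Q=\hat\bSigma_{\cal D,\cal D}^{-1}$, $Q=\bSigma_{\cal D,\cal D}^{-1}$ and decompose
\[
\hat P\hat Q-PQ=(\hat P-P)\hat Q+P(\hat Q-Q),
\]
bounding each summand in $\vinf{\cdot}$ by submultiplicativity. The first term uses $\vinf{\hat P-P}<\epsilon$ from Lemma~\ref{lemma1} and the auxiliary bound $\vinf{\hat Q}\le\varphi(1-\varphi\epsilon)^{-1}$; the second uses $\vinf{P}\le\varphi$ together with $\vinf{\hat Q-Q}=\vone{\hat Q-Q}$ from the first part. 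Collecting the geometric factors then yields a bound of the announced order $\varphi\epsilon(1-\varphi\epsilon)^{-1}$. I expect the main obstacle to be precisely this bookkeeping: every sample-based factor ($\hat Q$ and $\hat P$) must be re-expressed through its population counterpart plus a perturbation controlled by Lemma~\ref{lemma1}, and one must consistently invoke the symmetry of $\bSigma_{\cal D,\cal D}^{-1}$ to pass between $\vone{\cdot}$ and $\vinf{\cdot}$, so that the single hypothesis $\varphi\epsilon<1$ guarantees convergence of every Neumann expansion in sight.
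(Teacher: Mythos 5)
Your proof is correct and arrives at the same bounds, but the device you use for the inverse perturbation is genuinely different from the paper's. For the first display the paper sets $\eta_1=\vinf{\hat\bSigma_{\cal D,\cal D}-\bSigma_{\cal D,\cal D}}$, $\eta_3=\vinf{\hat\bSigma_{\cal D,\cal D}^{-1}-\bSigma_{\cal D,\cal D}^{-1}}$ and uses the identity $\hat A^{-1}-A^{-1}=\hat A^{-1}(A-\hat A)A^{-1}$ to obtain the self-referential inequality $\eta_3\le(\varphi+\eta_3)\varphi\eta_1$, which it then solves for $\eta_3$ using $\varphi\eta_1<1$; you instead expand $\hat\bSigma_{\cal D,\cal D}^{-1}=\sum_{m\ge 0}(-\bSigma_{\cal D,\cal D}^{-1}E)^m\bSigma_{\cal D,\cal D}^{-1}$ as a Neumann series. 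Both computations give the identical bound $\epsilon\varphi^2(1-\varphi\epsilon)^{-1}$, but your route buys something the paper's does not: convergence of the series \emph{proves} that $\hat\bSigma_{\cal D,\cal D}$ is invertible, whereas solving the paper's inequality for $\eta_3$ is only legitimate if one already knows $\eta_3<\infty$, i.e.\ invertibility is tacitly presupposed. The paper's trick, in exchange, is shorter and avoids series manipulation. For the second display the difference is cosmetic: you split $\hat P\hat Q-PQ=(\hat P-P)\hat Q+P(\hat Q-Q)$ and control $\vinf{\hat Q}\le\varphi(1-\varphi\epsilon)^{-1}$, while the paper uses the three-term splitting $(\hat P-P)(\hat Q-Q)+(\hat P-P)Q+P(\hat Q-Q)$; after the algebra both routes yield $\varphi\epsilon(1+\varphi^2)(1-\varphi\epsilon)^{-1}$ rather than literally the stated $\varphi\epsilon(1-\varphi\epsilon)^{-1}$, so your concession that you only get a bound ``of the announced order'' is exactly the same constant slack present in the paper's own proof. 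Since $\varphi$ is held fixed and the lemma is only invoked downstream through bounds with generic constants (e.g.\ $C\varphi\epsilon/(1-\varphi\epsilon)$ in the proof of Theorem~\ref{asy}), this slack is harmless in both arguments. Finally, your systematic use of symmetry to pass between $\vone{\cdot}$ and $\vinf{\cdot}$ makes explicit a point the paper glosses over: the lemma states its first bound in $\vone{\cdot}$ while the proof works throughout in $\vinf{\cdot}$.
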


\begin{proof}[Proof of Lemma~\ref{lemma2} ]
 Let $\eta_1= \Vert
\hat\bSigma_{\cal D, \cal D}-\bSigma_{\cal D, \cal D}\Vert_{\infty}  $,
 $\eta_2= \Vert \hat\bSigma_{\cal D^C, \cal D}-\bSigma_{\cal D^C, \cal D}\Vert_{\infty}  $ and
$\eta_3= \Vert (\hat\bSigma_{\cal D, \cal D})^{-1}-(\bSigma_{\cal D, \cal D})^{-1} \Vert_{\infty}$.
First we have
\beqn
\eta_3  \le \Vert {(\hat\bSigma_{\cal D, \cal D})^{-1}}\Vert_{\infty} \times \Vert
(\hat\bSigma_{\cal D, \cal D}-\bSigma_{\cal D, \cal D}) \Vert_{\infty} \times \Vert
(\bSigma_{\cal D, \cal D})^{-1}\Vert_{\infty} = (\varphi+\eta_3) \varphi \eta_1.
\eeqn
On the other hand,
\begin{eqnarray*}\label{lemma2.eq1}
\Vert
\hat\bSigma_{\cal D^C, \cal D}(\hat\bSigma_{\cal D, \cal D})^{-1}-\bSigma_{\cal D^C, \cal D}(\bSigma_{\cal D, \cal D})^{-1}\Vert_{\infty}
 &\le&   \Vert \hat\bSigma_{\cal D^C, \cal D}-\bSigma_{\cal D^C, \cal D}\Vert_{\infty} \times \Vert
(\hat\bSigma_{\cal D, \cal D})^{-1}-(\bSigma_{\cal D, \cal D})^{-1}\Vert_{\infty}\nonumber\\
&&+\Vert \hat\bSigma_{\cal D^C, \cal D}-\bSigma_{\cal D^C, \cal D}\Vert_{\infty} \times
\Vert(\bSigma_{\cal D, \cal D})^{-1}\Vert_{\infty} \nonumber\\
&&+\vinf{\bSigma_{\cal D^C, \cal D}}\times\Vert
(\hat\bSigma_{\cal D, \cal D})^{-1}-(\bSigma_{\cal D, \cal D})^{-1}\Vert_{\infty} \\
&\le&\eta_2\eta_3+\eta_2\varphi+\varphi\eta_3.\nonumber
\end{eqnarray*}
By $\varphi \eta_1<1$ we have $\eta_3 \le \varphi^2
\eta_1 (1-\varphi \eta_1)^{-1}$ and hence
\begin{equation*}\label{lemma2.eq3}
\Vert
\hat\bSigma_{\cal D^C, \cal D}(\hat\bSigma_{\cal D, \cal D})^{-1}-\bSigma_{\cal D^C, \cal D}(\bSigma_{\cal D, \cal D})^{-1}\Vert_{\infty}<\dfrac{\varphi\epsilon}{1-\varphi\epsilon}.
\end{equation*}

\end{proof}

\begin{lemma}\label{uporacle}
Define
\beq\label{theta0}
\hat\btheta_{k,\cal D}^{0}=\hat\bSigma^{-1}_{\cal D, \cal D}(\hat\bmu_{k,\cal D}-\hat\bmu_{1,\cal D}).
\eeq
Then
$
\vone{\hat\btheta_{k,\cal D}^{0}-\btheta_{k,\cal D}}\le \dfrac{\varphi\epsilon(1+\varphi\Delta)}{1-\varphi\epsilon}.
$
\end{lemma}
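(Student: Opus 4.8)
The plan is to recognize $\hat\btheta^{0}_{k,{\cal D}}-\btheta_{k,{\cal D}}$ as a perturbation of the map ``matrix times vector'' and to control it using the perturbation bounds already established in Lemmas~\ref{lemma1} and~\ref{lemma2}. First I would record the population identity $\btheta_{k,{\cal D}}=\bSigma^{-1}_{{\cal D},{\cal D}}(\bmu_{k,{\cal D}}-\bmu_{1,{\cal D}})$. This follows because $\btheta_{k,{\cal D}^C}=0$, so the ${\cal D}$-block of $\bSigma\btheta_k=\bmu_k-\bmu_1$ reads $\bSigma_{{\cal D},{\cal D}}\btheta_{k,{\cal D}}=\bmu_{k,{\cal D}}-\bmu_{1,{\cal D}}$, which is the same block manipulation underlying Lemma~\ref{muD}. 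Writing $A=\bSigma^{-1}_{{\cal D},{\cal D}}$, $\hat A=\hat\bSigma^{-1}_{{\cal D},{\cal D}}$, $b=\bmu_{k,{\cal D}}-\bmu_{1,{\cal D}}$ and $\hat b=\hat\bmu_{k,{\cal D}}-\hat\bmu_{1,{\cal D}}$, the target reduces to a bound on $\vone{\hat A\hat b-Ab}$, which I would obtain throughout on the event $A(\epsilon)\cap B(\epsilon)$ with $\epsilon<1/\varphi$, so that Lemma~\ref{lemma2} is in force and $\hat\bSigma_{{\cal D},{\cal D}}$ is invertible.

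The central step is the splitting $\hat A\hat b-Ab=\hat A(\hat b-b)+(\hat A-A)b$, followed by the elementary inequality $\vone{Mv}\le\vone{M}\,\vone{v}$ (max column sum times $\ell_1$ norm). This yields $\vone{\hat A\hat b-Ab}\le\vone{\hat A}\,\vone{\hat b-b}+\vone{\hat A-A}\,\vone{b}$. I would then feed in four ingredients: $\vone{\hat b-b}<\epsilon$ from the last display of Lemma~\ref{lemma1}; $\vone{\hat A-A}<\varphi^2\epsilon/(1-\varphi\epsilon)$ from Lemma~\ref{lemma2}; $\vone{b}\le\Delta$ from the definition of $\Delta$, since restricting the mean-difference vector to ${\cal D}$ only shrinks its $\ell_1$ norm; and $\vone{A}=\vone{\bSigma^{-1}_{{\cal D},{\cal D}}}=\vinf{\bSigma^{-1}_{{\cal D},{\cal D}}}\le\varphi$ by symmetry of $\bSigma^{-1}_{{\cal D},{\cal D}}$. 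The only indirectly accessible quantity is $\vone{\hat A}$, which I would bound by the triangle inequality as $\vone{\hat A}\le\vone{A}+\vone{\hat A-A}\le\varphi+\varphi^2\epsilon/(1-\varphi\epsilon)$.

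The final move is purely algebraic. Substituting the bounds gives $\left(\varphi+\frac{\varphi^2\epsilon}{1-\varphi\epsilon}\right)\epsilon+\frac{\varphi^2\epsilon}{1-\varphi\epsilon}\Delta$, and the first group collapses because $\varphi\epsilon(1-\varphi\epsilon)+\varphi^2\epsilon^2=\varphi\epsilon$, so that $\varphi\epsilon+\frac{\varphi^2\epsilon^2}{1-\varphi\epsilon}=\frac{\varphi\epsilon}{1-\varphi\epsilon}$; adding the remaining $\Delta$-term produces exactly $\frac{\varphi\epsilon(1+\varphi\Delta)}{1-\varphi\epsilon}$, as claimed. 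I do not expect a genuine obstacle, as this is a triangle-inequality perturbation argument. The one point demanding care is the bookkeeping of the two matrix norms $\vone{\cdot}$ and $\vinf{\cdot}$ (which coincide here only because $\bSigma^{-1}_{{\cal D},{\cal D}}$ is symmetric) and arranging the split so that the $\vone{\hat A}$ bound telescopes cleanly with the leading term rather than leaving a loose $O(\epsilon^2)$ remainder.
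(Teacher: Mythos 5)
Your proof is correct and takes essentially the same route as the paper: both are triangle-inequality perturbation bounds on $\hat\bSigma_{\cal D,\cal D}^{-1}(\hat\bmu_{k,\cal D}-\hat\bmu_{1,\cal D})-\bSigma_{\cal D,\cal D}^{-1}(\bmu_{k,\cal D}-\bmu_{1,\cal D})$ fed by the same ingredients (Lemma~\ref{lemma1} for the mean differences, Lemma~\ref{lemma2} for the inverse-covariance error, and $\varphi,\Delta$ for the population quantities). The only cosmetic difference is that you use the two-term split $\hat A(\hat b-b)+(\hat A-A)b$ and then bound $\vone{\hat A}\le\vone{A}+\vone{\hat A-A}$, which upon expansion reproduces exactly the paper's three-term decomposition $(\hat A-A)(\hat b-b)+A(\hat b-b)+(\hat A-A)b$ and the identical final algebra.
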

\begin{proof}[Proof of Lemma~\ref{uporacle}]
By definition, we have
\bean
 &&\Vert\hat\bSigma_{\cal D,\cal D}^{-1}(\hat\bmu_{k,\cal D}-\hat\bmu_{1,\cal D})-\bSigma_{\cal D,\cal D}^{-1}(\bmu_{k,\cal D}-\bmu_{1,\cal D})\Vert_{1}\\
 &&\le \Vert\hat\bSigma_{\cal D,\cal D}^{-1}-\bSigma_{\cal D,\cal D}^{-1}\Vert_{1} \Vert(\hat\bmu_{k,\cal D}-\hat\bmu_{1,\cal D})-(\bmu_{k,\cal D}-\bmu_{1,\cal D})\Vert_{1}\\
&&+\Vert\bSigma_{\cal D,\cal D}^{-1}\Vert_{1}\Vert(\hat\bmu_{k,\cal D}-\hat\bmu_{1,\cal D})-(\bmu_{k,\cal D}-\bmu_{1,\cal D})\Vert_{1}+\Vert\hat\bSigma_{\cal D,\cal D}^{-1}-\bSigma_{\cal D,\cal D}^{-1}\Vert_{1} \Vert\bmu_{k,\cal D}-\bmu_{1,\cal D}\Vert_{1} \quad \\
&\le& \dfrac{\varphi\epsilon(1+\varphi\Delta)}{1-\varphi\epsilon}.
\eean

\end{proof}

\begin{lemma}\label{oracle}
If $A(\epsilon)$ and $B(\epsilon)$ have occurred for $\epsilon<\min\{\frac{1}{2\varphi},\dfrac{\lambda}{1+\varphi\Delta}\}$, then for all $k$
\beqn
\Vert\hat\btheta_{k,\cal D}^{(\mathrm{oracle})}(\lambda)-\btheta_{k,\cal D}\Vert_{\infty}\le 4\lambda\varphi.
\eeqn

\end{lemma}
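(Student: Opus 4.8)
The plan is to relate the penalized oracle solution $\hat\btheta_{k,\cal D}^{(\mathrm{oracle})}(\lambda)$ to the unpenalized quantity $\hat\btheta_{k,\cal D}^{0}=\hat\bSigma^{-1}_{\cal D,\cal D}(\hat\bmu_{k,\cal D}-\hat\bmu_{1,\cal D})$ of Lemma~\ref{uporacle}, and then invoke the triangle inequality
$$
\vinf{\hat\btheta_{k,\cal D}^{(\mathrm{oracle})}-\btheta_{k,\cal D}}\le \vinf{\hat\btheta_{k,\cal D}^{(\mathrm{oracle})}-\hat\btheta_{k,\cal D}^{0}}+\vinf{\hat\btheta_{k,\cal D}^{0}-\btheta_{k,\cal D}},
$$
bounding each term on the right by $2\lambda\varphi$. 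The second term is essentially free: Lemma~\ref{uporacle} gives $\vone{\hat\btheta_{k,\cal D}^{0}-\btheta_{k,\cal D}}\le \varphi\epsilon(1+\varphi\Delta)(1-\varphi\epsilon)^{-1}$, and since $\vinf{\cdot}\le\vone{\cdot}$, I would substitute the two constraints on $\epsilon$: from $\epsilon<\lambda/(1+\varphi\Delta)$ one gets $\epsilon(1+\varphi\Delta)<\lambda$, and from $\epsilon<1/(2\varphi)$ one gets $(1-\varphi\epsilon)^{-1}<2$, so this term is at most $2\lambda\varphi$.

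The first term is where the penalty enters, and I would handle it through the KKT conditions of \eqref{formula:oracle}. Because the quadratic-plus-linear part of the objective separates across $k$ and only the group penalty $\sum_{j\in\cal D}\Vert\btheta_{\cdot j}\Vert$ couples coordinates sharing the same index $j$, the stationarity condition for each fixed $k$ reads
$$
\hat\bSigma_{\cal D,\cal D}\hat\btheta_{k,\cal D}^{(\mathrm{oracle})}-(\hat\bmu_{k,\cal D}-\hat\bmu_{1,\cal D})+\lambda\, g_{k,\cal D}=0,
$$
where $g_{\cdot j}=(g_{2j},\ldots,g_{Kj})^\T$ is a subgradient of $\Vert\btheta_{\cdot j}\Vert$ and therefore satisfies $\Vert g_{\cdot j}\Vert\le1$. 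Solving this gives $\hat\btheta_{k,\cal D}^{(\mathrm{oracle})}-\hat\btheta_{k,\cal D}^{0}=-\lambda\,\hat\bSigma^{-1}_{\cal D,\cal D}g_{k,\cal D}$, whence $\vinf{\hat\btheta_{k,\cal D}^{(\mathrm{oracle})}-\hat\btheta_{k,\cal D}^{0}}\le\lambda\,\vinf{\hat\bSigma^{-1}_{\cal D,\cal D}}\,\vinf{g_{k,\cal D}}$. Since each $|g_{kj}|\le\Vert g_{\cdot j}\Vert\le1$, I have $\vinf{g_{k,\cal D}}\le1$, reducing the first term to a bound on $\vinf{\hat\bSigma^{-1}_{\cal D,\cal D}}$.

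To finish, I would bound $\vinf{\hat\bSigma^{-1}_{\cal D,\cal D}}\le\vinf{\bSigma^{-1}_{\cal D,\cal D}}+\vinf{\hat\bSigma^{-1}_{\cal D,\cal D}-\bSigma^{-1}_{\cal D,\cal D}}$. Here is the one delicate point, and the place where I would be most careful: Lemma~\ref{lemma2} only controls the perturbation in the $\vone{\cdot}$ (max-column-sum) norm, whereas I need the $\vinf{\cdot}$ (max-row-sum) norm. I would resolve this by noting that $\hat\bSigma^{-1}_{\cal D,\cal D}-\bSigma^{-1}_{\cal D,\cal D}$ is symmetric, so its $\vinf{\cdot}$ and $\vone{\cdot}$ norms coincide, giving $\vinf{\hat\bSigma^{-1}_{\cal D,\cal D}-\bSigma^{-1}_{\cal D,\cal D}}<\varphi^2\epsilon(1-\varphi\epsilon)^{-1}$. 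Combined with $\vinf{\bSigma^{-1}_{\cal D,\cal D}}\le\varphi$ and $\varphi\epsilon<1/2$, this yields $\vinf{\hat\bSigma^{-1}_{\cal D,\cal D}}\le\varphi+\varphi^2\epsilon(1-\varphi\epsilon)^{-1}<\varphi+\varphi=2\varphi$, so the first term is at most $2\lambda\varphi$. Adding the two bounds gives $\vinf{\hat\btheta_{k,\cal D}^{(\mathrm{oracle})}-\btheta_{k,\cal D}}\le4\lambda\varphi$, as claimed. Apart from the symmetry observation, the remainder is routine bookkeeping with the two constraints imposed on $\epsilon$.
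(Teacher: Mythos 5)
Your proof is correct and takes essentially the same route as the paper's: both start from the KKT stationarity identity $\hat\btheta_{k,\cal D}^{(\mathrm{oracle})}=\hat\btheta_{k,\cal D}^{0}-\lambda\hat\bSigma_{\cal D,\cal D}^{-1}\hat\bt_{k,\cal D}$, bound the unpenalized error via Lemma~\ref{uporacle}, bound the penalty term by splitting $\hat\bSigma_{\cal D,\cal D}^{-1}$ into $\bSigma_{\cal D,\cal D}^{-1}$ plus the perturbation controlled by Lemma~\ref{lemma2}, and then use the two constraints on $\epsilon$ to reach $4\lambda\varphi$. Your explicit symmetry remark (identifying the $\Vert\cdot\Vert_{1}$ and $\Vert\cdot\Vert_{\infty}$ norms of the symmetric matrix $\hat\bSigma_{\cal D,\cal D}^{-1}-\bSigma_{\cal D,\cal D}^{-1}$) simply makes precise a step the paper uses implicitly.
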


\begin{proof}[Proof of Lemma~\ref{oracle}]
Observe
$
\hat\btheta_k^{\mathrm{oracle}}=\hat\bSigma_{\cal D,\cal D}^{-1}(\hat\bmu_{k,\cal D}-\hat\bmu_{1,\cal D})-\lambda \hat\bSigma_{\cal D,\cal D}^{-1}\hat \bt_{k,\cal D}.
$
Therefore,
\bean
&&\Vert\hat\btheta^{\mathrm{oracle}}_{k,\cal D}-\btheta_{k,\cal D}\Vert_{\infty} \nonumber\\
&\le & \Vert\hat\btheta^{0}_{k,\cal D}-\btheta_{k,\cal D}\Vert_{\infty}+\lambda\Vert\hat\bSigma_{\cal D,\cal D}^{-1}-\bSigma_{\cal D,\cal D}^{-1}\Vert_{1}\Vert\hat \bt_{k,\cal D}\Vert_{\infty}+\lambda\Vert\bSigma_{\cal D,\cal D}^{-1}\Vert_{1}\Vert\hat \bt_{k,\cal D}\Vert_{\infty}
\eean
where $\hat\btheta_{k,D}^0$ is defined as in \eqref{theta0}. Now $\Vert\hat \bt_{k,\cal D}\Vert_{\infty}\le 1$ and we have
\bean
\Vert\hat\btheta^{\mathrm{oracle}}_{k,\cal D}-\btheta_{k,\cal D}\Vert_{\infty}\le \dfrac{\varphi \epsilon(1+\varphi\Delta)+\lambda\varphi}{1-\varphi\epsilon}<4\varphi\lambda.
\eean

\end{proof}

\begin{lemma}\label{CS}
For a sets of real numbers $\{a_1,\ldots,a_N\}$, if $\sum_{i=1}^N a_i^2\le \kappa^2<1$, then $\sum_{i=1}^N (a_i+b)^2<1$ as long as $b<\dfrac{1-\kappa}{\sqrt{N}}$.
\end{lemma}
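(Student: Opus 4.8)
The plan is to reduce the claim to a single application of the triangle inequality for the Euclidean norm. Write $\mathbf{a}=(a_1,\ldots,a_N)^\T$ and let $\mathbf{1}_N$ denote the $N$-dimensional all-ones vector. The hypothesis states precisely that $\Vert\mathbf{a}\Vert_2=(\sum_{i=1}^N a_i^2)^{1/2}\le\kappa$, and the quantity to be bounded is exactly $\sum_{i=1}^N(a_i+b)^2=\Vert\mathbf{a}+b\,\mathbf{1}_N\Vert_2^2$. So the whole lemma is a statement about the $\ell_2$ norm of a shifted vector.

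First I would invoke Minkowski's inequality to split the shift off:
\[
\Vert\mathbf{a}+b\,\mathbf{1}_N\Vert_2\le\Vert\mathbf{a}\Vert_2+|b|\,\Vert\mathbf{1}_N\Vert_2=\Vert\mathbf{a}\Vert_2+|b|\sqrt{N}.
\]
Substituting the hypothesis $\Vert\mathbf{a}\Vert_2\le\kappa$ together with the assumed bound on $b$ gives $\Vert\mathbf{a}+b\,\mathbf{1}_N\Vert_2<\kappa+(1-\kappa)/\sqrt{N}\cdot\sqrt{N}=1$, and squaring both sides yields $\sum_{i=1}^N(a_i+b)^2<1$, as desired. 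Equivalently, one may expand $\sum_i(a_i+b)^2=\sum_i a_i^2+2b\sum_i a_i+Nb^2$ and control the cross term by Cauchy--Schwarz, $|\sum_i a_i|\le\sqrt{N}\,(\sum_i a_i^2)^{1/2}\le\sqrt{N}\,\kappa$, which produces the same upper bound $(\kappa+|b|\sqrt{N})^2$; I would likely present whichever of the two is shorter.

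There is essentially no obstacle here, so the only point requiring any care is the sign of $b$: both arguments naturally control $|b|$ rather than $b$ (the cross term is bounded by $2|b|\sqrt{N}\kappa$), so the stated bound $b<(1-\kappa)/\sqrt{N}$ should be understood as a bound on $|b|$, which is exactly how the lemma will be applied, with $b$ arising as the magnitude of a sampling perturbation. Under that reading the final inequality is \emph{strict}, since the strict bound on $b$ combines with $\kappa<1$ to give $\kappa+|b|\sqrt{N}<1$ strictly before squaring.
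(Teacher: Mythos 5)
Your proof is correct and is essentially the paper's own argument: the paper expands $\sum_{i}(a_i+b)^2$ and bounds the cross term by Cauchy--Schwarz, obtaining $\kappa^2+2\kappa|b|\sqrt{N}+Nb^2=(\kappa+|b|\sqrt{N})^2<1$, which is exactly your second variant (your Minkowski formulation is the same bound packaged as the triangle inequality). Your caveat that the hypothesis must be read as $|b|<(1-\kappa)/\sqrt{N}$ is also accurate --- the paper's proof likewise bounds $2b\sum_i a_i$ by $2\kappa\sqrt{Nb^2}=2\kappa|b|\sqrt{N}$, so it too implicitly controls $|b|$, and in the paper's application, where $b$ is a nonnegative perturbation bound, the distinction is harmless.
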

\begin{proof}
By the Cauchy-Schwartz inequality, we have that 
\bea
\sum_{i=1}^N (a_i+b)^2&=& \sum_{i=1}^N a_i^2+2\sum_{i=1}^N a_i b+Nb^2\\
&\le& \sum_{i=1}^N a_i^2+2\sqrt{(\sum_{i=1}^N a_i^2) \cdot Nb^2 }+Nb^2\\
&\le & \kappa^2+2\kappa \sqrt{Nb^2}+Nb^2
\eea
which is less than 1 when $b<\dfrac{1-\kappa}{\sqrt{N}}$.
\end{proof}

We are ready to complete the proof of Theorem~\ref{asy}.

\begin{proof}[Proof of Theorem~\ref{asy}]
We first consider the first conclusion. For any $\lambda<\frac{\theta_{\min}}{8\varphi}$ and $\epsilon<\min\{\frac{1}{2\varphi},\dfrac{\lambda}{1+\varphi\Delta}\}$,
consider the event $A(\epsilon) \cap B(\epsilon)$. By Lemmas \ref{KKT}, \ref{prob.bounds} \& \ref{oracle} it suffices to verify \eqref{KKT1}.

For any $j \in {\cal D}^c$, by Lemma~\ref{muD} we have
\bean
&&|(\hat\bSigma_{\cal D^C,\cal D}\hat\btheta^{\mathrm{(oracle)}}_{k,\cal D})_j-(\hat\mu_{kj}-\hat\mu_{1j})|\\
&\le& |(\hat\bSigma_{\cal D^C,\cal D}\hat\btheta^{\mathrm{(oracle)}}_{k,\cal D})_j-(\bSigma_{\cal D^C,\cal D}\btheta_{k,\cal D})_j|+|(\hat\mu_{kj}-\hat\mu_{1j})-(\mu_{kj}-\mu_{1j})|\\
&\le&|(\hat\bSigma_{\cal D^C,\cal D}\hat\btheta^{\mathrm{(oracle)}}_{k,\cal D})_j-(\bSigma_{\cal D^C,\cal D}\btheta_{k,\cal D})_j|+\epsilon\\
&\le&|(\hat\bSigma_{\cal D^C,\cal D}\hat\btheta^{\mathrm{(0)}}_{k,\cal D})_j-(\bSigma_{\cal D^C,\cal D}\btheta_{k,\cal D})_j|+\epsilon+\lambda|(\hat\bSigma_{\cal D^C,\cal D}\hat\bSigma_{\cal D, \cal D}^{-1}\hat \bt_{k,\cal D})_j|\\
\eean
\bea
&&|(\hat\bSigma_{\cal D^C,\cal D}\hat\btheta^{\mathrm{(oracle)}}_{k,\cal D})_j-(\bSigma_{\cal D^C,\cal D}\btheta_{k,\cal D})_j|+\epsilon \nonumber \\
&\le &  \Vert (\hat\bSigma_{\cal D^C,\cal D})_j-(\bSigma_{\cal D^C,\cal D})_j \Vert_1   \vinf{\hat\btheta^{0}_{k,\cal D}-\btheta_{k,\cal D}}+\vinf{\btheta_{k,\cal D}}
\Vert (\hat\bSigma_{\cal D^C,\cal D})_j-(\bSigma_{\cal D^C,\cal D})_j \Vert_1 \nonumber  \\
&&+ \Vert (\bSigma_{\cal D^C,\cal D})_j \Vert_{1}  \vinf{\hat\btheta_{k,\cal D}^0-\btheta_{k,\cal D}}+\epsilon \nonumber  \\
&\le & C \epsilon. \label{jack}
\eea
\bean
&&|(\hat\bSigma_{\cal D^C, \cal D}\hat\bSigma_{\cal D, \cal D}^{-1}\hat\bt_{k,\cal D})_j-(\bSigma_{\cal D^C,\cal D}\bSigma_{\cal D, \cal D}^{-1}\bt_{k,\cal D})_j|\\
&\le&\vinf{\hat\bSigma_{\cal D^C, \cal D}\hat\bSigma_{\cal D, \cal D}^{-1}-\bSigma_{\cal D^C,\cal D}\bSigma_{\cal D, \cal D}^{-1}}\vinf{\hat\bt_{k,\cal D}-\bt_{k,\cal D}} \nonumber \\
&&+\vinf{\bSigma_{\cal D^C,\cal D}\bSigma_{\cal D, \cal D}^{-1}}\vinf{\hat\bt_{k,\cal D}-\bt_{k,\cal D}} +\vinf{\hat\bSigma_{\cal D^C, \cal D}\hat\bSigma_{\cal D, \cal D}^{-1}-\bSigma_{\cal D^C,\cal D}\bSigma_{\cal D, \cal D}^{-1}}   |(\bt_{k,\cal D})_j|\\
\eean
\bean
|\hat t_{kj}-t_{kj}| &=&
|\dfrac{\hat\theta_{kj}\Vert \theta_{.j} \Vert -\theta_{kj}\Vert \hat\theta_{.j} \Vert }{\Vert \theta_{.j} \Vert \Vert \hat\theta_{.j} \Vert }|\\
&\le&\dfrac{|\hat\theta_{kj}-\theta_{kj}|\Vert \theta_{.j} \Vert+\theta_{\max} \Vert \theta_{.j}- \hat\theta_{.j} \Vert }{  \Vert \theta_{.j} \Vert \Vert \hat\theta_{.j} \Vert   }\\
&\le&\dfrac{C\varphi}{\theta_{\min}\surd{(K-1)}}\lambda.
\eean
Therefore, 
\bea
&&\lambda|(\hat\bSigma_{\cal D^C,\cal D}\hat\bSigma_{\cal D, \cal D}^{-1}\hat \bt_{k,\cal D})_j| \nonumber \\
&\le &\lambda|(\bSigma_{\cal D^C,\cal D}\bSigma^{-1}_{\cal D,\cal D}\bt_{k,\cal D})_j|+\lambda(\dfrac{C\varphi\epsilon}{1-\varphi\epsilon}+\eta^*\dfrac{C\varphi\lambda}{\theta_{\min}\sqrt{K-1}})\\
&\le& \lambda|(\bSigma_{\cal D^C,\cal D}\bSigma^{-1}_{\cal D,\cal D}\bt_{k,\cal D})_j|+C\lambda^2 \label{tom}
\eea
Under condition (C0), it follows from (\ref{jack}) and (\ref{tom}) that
\beq
|(\hat\bSigma_{\cal D^C,\cal D}\hat\btheta^{\mathrm{(oracle)}}_{k,\cal D})_j-(\hat\mu_{kj}-\hat\mu_{1j})|\le \lambda|(\bSigma_{\cal D^C,\cal D}\bSigma^{-1}_{\cal D,\cal D}\bt_{k,\cal D})_j| +C\lambda^2
\eeq
Combine condition (C0) with Lemma~\ref{CS}, we have that, there exists a generic constant $M>0$, such that when $\lambda<M(1-\kappa)$, \eqref{KKT1} is true. Therefore, the first conclusion is true.

Under conditions (C2)--(C4), the second conclusion directly follows from the first conclusion.
\end{proof}

\begin{proof}[Proof of Theorem~\ref{error}]
We first show the first conclusion.
Define $\hat Y(\btheta_{2},\ldots,\btheta_{K})$ as the prediction by the Bayes rule and $\hat Y(\hat\btheta_{2},\ldots,\hat\btheta_{K})$ as the prediction as the prediction by the estimated classification rule. Also define $l_k=(\bX-\dfrac{\bmu_k}{2})^\T\btheta_k+\log(\pi_k)$ and $\hat l_k=(\bX-\dfrac{\hat\bmu_k}{2})^\T\hat\btheta_k+\log(\hat\pi_k)$.

Define $C(\epsilon)=\{|\hat\pi_k-\pi_k|\le \min\{\min_{k}\pi_k/2,\epsilon\}\}$. By the Bernstein inequality we have that $\Pr(C(\epsilon))\le C\exp(-Cn)$.

Assume that the event $A(\epsilon)\cap B(\epsilon)\cap C(\epsilon)$ has happened. By Lemma~\ref{prob.bounds}, we have
\begin{equation}
\Pr(A(\epsilon)\cap B(\epsilon)\cap C(\epsilon))\ge 1-Cpd\exp(-Cn\dfrac{\epsilon^2}{Kd^2})-CK\exp(-C\dfrac{n}{K^2})-Cp(K-1)\exp(-Cn\dfrac{\epsilon^2}{K})
\end{equation}

For any $\epsilon_0>0$,
\bean
R_n-R&\le& \Pr(\hat Y(\btheta_{2},\ldots,\btheta_{K})\ne \hat Y(\hat\btheta_{2},\ldots,\hat\btheta_{K}))\\
&\le&1-\Pr(|\hat l_k-l_k|<\epsilon_0/2,|l_k-l_{k'}|>\epsilon_0, \mbox{for any $k,k'$})\\
&\le&\Pr(|\hat l_k-l_k|\ge\epsilon_0/2 \mbox{ for some $k$})+\Pr(|l_k-l_{k'}|\le\epsilon_0 \mbox{ for some $k,k'$}).
\eean
Now, for $\bX$ in each class, $l_k-l_{k'}$ is normal with variance $(\btheta_k-\btheta_{k'})^\T\bSigma(\btheta_k-\btheta_{k''})$. Therefore,
\bean
\Pr(|l_k-l_{k'}|\le\epsilon_0 \mbox{ for some $k,k'$})&\le&\sum_{k^{''}}\Pr(|l_k-l_{k'}|\le\epsilon_0\mid Y=k^{''})\pi_{k^{''}}\\
&\le& \sum_{k,k^{'},k^{''}} \pi_{k^{''}} \dfrac{C\epsilon_0}{\{(\btheta_k-\btheta_{k'})^\T\bSigma(\btheta_k-\btheta_{k'})\}^{1/2}}\\
&\le&CK^2\epsilon_0.
\eean

On the other hand, conditional on training data, $\hat l_k-l_k$ is normal with mean $u(k,k')=\bmu_{k'}^\T(\hat\bmu_k-\bmu_k)+\dfrac{1}{2}(\bmu_k^\T\btheta_k-\hat\bmu_k^\T\hat\btheta_{k})+\log{\hat\pi_k}-\log{\pi_k}$ and variance $(\hat\btheta_k-\btheta_k)^\T\bSigma(\hat\btheta_k-\btheta_k)$ within class $k'$. By Markov's inequality, we have
\bean
\Pr(|\hat l_k-l_k|\ge\epsilon_0/2 \mbox{ for some $k$})&=&\sum_{k'}\pi_{k'}\Pr(|\hat l_k-l_k|\ge\epsilon_0/2\mid Y=k') \nonumber \\
&\le &CE\{\dfrac{\max_{k}(\hat\btheta_k-\btheta_k)^\T\bSigma(\hat\btheta_k-\btheta_k)}{(\epsilon_0-u(k,k'))^2}\}.
\eean
Moreover, under the event $A(\epsilon)\cap B(\epsilon)\cap C(\epsilon)$
\bean
\max_{k}(\hat\btheta_k-\btheta_k)^\T\bSigma(\hat\btheta_k-\btheta_k)&\le& C\lambda\\
|u(k,k')|&\le& |\bmu_{k'}(\hat\btheta_k-\btheta_k)|+\dfrac{1}{2}|\bmu_k^\T(\hat\bmu_k-\bmu_k)|\\
&&+\dfrac{1}{2}|(\bmu_k-\hat\bmu_k)^\T\hat\btheta_{k}|+|\log{\hat\pi_k}-\log{\pi_k}|\\
&\le& C_1\lambda\label{th2:eq1}
\eean
Hence, pick $\epsilon_0=M_2\lambda^{1/3}$ such that $\epsilon_0\ge C_1\lambda/2$, for $C_1$ in \eqref{th2:eq1}. Then $\Pr(|\hat l_k-l_k|\ge\epsilon_0/2 \mbox{ for some $k$})\le C\lambda^{1/3}$. It follows that $|R_n-R|\le M_1\lambda^{1/3}$ for some positive constant $M_1$.

Under Conditions (C2)--(C4), the second conclusion is a direct consequence of the first conclusion.
\end{proof}

We need the result in the following proposition to show Lemma~\ref{prop:fisher}. A slightly different version of the proposition has been presented in \cite{Fukunaga} (Pages 446-450), but we include the proof here for completeness.

\begin{proposition}\label{prop:fisher}
The solution to \eqref{fisher} consists of all the right eigenvectors of $\bSigma^{-1}\bSigma_b$ corresponding to positive eigenvalues.

\end{proposition}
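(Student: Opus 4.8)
The plan is to treat \eqref{fisher} as a sequence of equality-constrained maximization problems and solve each by Lagrange multipliers, showing that the stationarity conditions force every solution to be a generalized eigenvector of the pencil $(\bSigma_b,\bSigma)$, equivalently a right eigenvector of $\bSigma^{-1}\bSigma_b$. First I would dispose of the base case $k=1$. The Lagrangian $\bmeta_1^\T\bSigma_b\bmeta_1-\phi(\bmeta_1^\T\bSigma\bmeta_1-1)$ has stationarity condition $\bSigma_b\bmeta_1=\phi\bSigma\bmeta_1$, i.e.\ $\bSigma^{-1}\bSigma_b\bmeta_1=\phi\bmeta_1$, and at any such stationary point the objective equals $\bmeta_1^\T\bSigma_b\bmeta_1=\phi$. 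Hence the maximizer is an eigenvector attaining the largest eigenvalue of $\bSigma^{-1}\bSigma_b$.

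Proceeding by induction, suppose $\bmeta_1,\ldots,\bmeta_{k-1}$ have already been identified as eigenvectors of $\bSigma^{-1}\bSigma_b$ with eigenvalues $\phi_1\ge\cdots\ge\phi_{k-1}$, normalized so that $\bmeta_l^\T\bSigma\bmeta_m=\delta_{lm}$. For the $k$-th problem I would form the Lagrangian
\[
\bmeta_k^\T\bSigma_b\bmeta_k-\phi(\bmeta_k^\T\bSigma\bmeta_k-1)-\sum_{l<k}\gamma_l\,\bmeta_k^\T\bSigma\bmeta_l,
\]
whose stationarity condition reads $2\bSigma_b\bmeta_k-2\phi\bSigma\bmeta_k-\sum_{l<k}\gamma_l\bSigma\bmeta_l=0$. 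The crux of the argument is to show that every multiplier $\gamma_m$ attached to an orthogonality constraint vanishes. Left-multiplying the stationarity condition by $\bmeta_m^\T$ for $m<k$ and using $\bmeta_m^\T\bSigma\bmeta_k=0$ together with $\bmeta_m^\T\bSigma\bmeta_l=\delta_{ml}$ leaves $2\bmeta_m^\T\bSigma_b\bmeta_k-\gamma_m=0$; but since $\bmeta_m$ is an eigenvector one has $\bmeta_m^\T\bSigma_b=\phi_m\bmeta_m^\T\bSigma$ by symmetry of $\bSigma_b$ and $\bSigma$, so $\bmeta_m^\T\bSigma_b\bmeta_k=\phi_m\bmeta_m^\T\bSigma\bmeta_k=0$, forcing $\gamma_m=0$. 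Consequently the stationarity condition collapses to $\bSigma_b\bmeta_k=\phi\bSigma\bmeta_k$, so $\bmeta_k$ is again an eigenvector of $\bSigma^{-1}\bSigma_b$ with objective value equal to its eigenvalue $\phi$. Maximizing subject to $\bSigma$-orthogonality to the previously chosen eigenvectors then selects the eigenvector with the largest remaining eigenvalue $\phi_k$, completing the induction.

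Finally I would argue why the relevant eigenvectors are exactly those with positive eigenvalues. Because $\bSigma_b=\frac{1}{K-1}\sum_k(\bmu_k-\bar\bmu)(\bmu_k-\bar\bmu)^\T\succeq 0$, the pencil has only nonnegative eigenvalues, and since $\sum_k(\bmu_k-\bar\bmu)=0$ its rank is at most $K-1$; thus $\bSigma^{-1}\bSigma_b$ has at most $K-1$ positive eigenvalues. A direction associated with a zero eigenvalue yields objective value $0$ and contributes nothing to discrimination, so the solutions of \eqref{fisher} are precisely the right eigenvectors corresponding to the positive eigenvalues. I expect the only genuinely delicate step to be the vanishing of the multipliers $\gamma_m$; an alternative and perhaps cleaner route is the symmetry reduction $\bW=\bSigma^{1/2}\bmeta$, $\bM=\bSigma^{-1/2}\bSigma_b\bSigma^{-1/2}$, which turns \eqref{fisher} into the standard Rayleigh--Ritz characterization of the top eigenvectors of the symmetric positive semidefinite matrix $\bM$, after which transferring back via $\bmeta=\bSigma^{-1/2}\bW$ recovers the right eigenvectors of $\bSigma^{-1}\bSigma_b$ and makes the orthogonality bookkeeping automatic.
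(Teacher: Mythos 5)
Your proof is correct, but your primary argument takes a genuinely different route from the paper's. You attack the constrained problems directly with Lagrange multipliers and induction: the crux is that the multipliers on the orthogonality constraints $\bmeta_k^\T\bSigma\bmeta_l=0$ vanish, because by the inductive hypothesis each earlier $\bmeta_m$ satisfies $\bSigma_b\bmeta_m=\phi_m\bSigma\bmeta_m$, whence $\bmeta_m^\T\bSigma_b\bmeta_k=\phi_m\bmeta_m^\T\bSigma\bmeta_k=0$. The paper instead uses exactly the ``alternative route'' you sketch in your last sentence: it sets $\bu_k=\bSigma^{1/2}\bmeta_k$, observes that \eqref{fisher} becomes the standard Rayleigh--Ritz problem for the symmetric positive semidefinite matrix $\bSigma^{-1/2}\bdelta_0\bdelta_0^\T\bSigma^{-1/2}$ (where $\bSigma_b$ is proportional to $\bdelta_0\bdelta_0^\T$), and then transfers the eigenvectors back via the general fact (Proposition~\ref{eigen}, quoted from Mardia) that if $\bx$ is an eigenvector of $\bA\bB$ for a nonzero eigenvalue then $\bB\bx$ is an eigenvector of $\bB\bA$, applied with $\bA=\bSigma^{-1/2}\bdelta_0\bdelta_0^\T$ and $\bB=\bSigma^{-1/2}$. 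What each approach buys: your Lagrangian route is self-contained, avoiding matrix square roots and the $\bA\bB$/$\bB\bA$ correspondence; but, as you yourself flag, it leaves a small gap at the step ``selects the eigenvector with the largest remaining eigenvalue $\phi_k$'' --- to know that an eigenvector attaining the $k$-th largest eigenvalue exists inside the $\bSigma$-orthogonal complement of $\bmeta_1,\ldots,\bmeta_{k-1}$, you need completeness of a $\bSigma$-orthonormal eigenbasis, which is most cleanly supplied by the spectral theorem for the whitened matrix, i.e.\ by the very reduction the paper uses. The paper's whitening argument makes that orthogonality bookkeeping automatic (which is why its proof is three lines), at the cost of invoking $\bSigma^{1/2}$ and the auxiliary eigenvector-correspondence proposition; both proofs handle the restriction to \emph{positive} eigenvalues at the same, somewhat informal, level of rigor.
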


\begin{proof}
For any $\bmeta_k$, set $\bu_k=\bSigma^{1/2}\bmeta_k$. It follows that solving \eqref{fisher} is equivalent to finding
\begin{equation}
(\bu_1^*,\ldots,\bu_{K-1}^*)=\arg\max_{\bu_k} \bu_k^\T\bSigma^{-1/2}\bdelta_0\bdelta_0^\T\bSigma^{-1/2}\bu_k, \mbox{ s.t. $\bu_k^\T\bu_k=1$ and $\bu_k^\T\bu_l=0$ for any $l<k$.}
\end{equation}
and then setting $\bmeta_k=\bSigma^{-1/2}\bu_k^*$. It is easy to see that $u_1^*,\ldots,u_{K-1}^*$ are the eigenvectors corresponding to positive eigenvalues of $\bSigma^{-1/2}\bdelta_0\bdelta_0^\T\bSigma^{-1/2}$. By Proposition~\ref{eigen}, let $\bA=\bSigma^{-1/2}\bdelta_0\bdelta_0^\T$, and $\bB=\bSigma^{-1/2}$ and we have that $\bmeta$ consists of all the eigenvectors of $\bSigma^{-1}\bdelta_0\bdelta_0^\T$ corresponding to positive eigenvalues.
\end{proof}

\begin{proposition}\label{eigen} (\cite{mardia}, Page 468, Theorem A.6.2) For two matrices $\bA$ and $\bB$, if $\bx$ is a non-trivial eigenvector of $\bA\bB$ for a nonzero eigenvalue, then $\by=\bB\bx$ is a non-trivial eigenvector of $\bB\bA$.

\end{proposition}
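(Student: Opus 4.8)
The plan is to verify the eigenvector relation for $\bB\bA$ by a direct computation exploiting associativity of matrix multiplication, and then to confirm nontriviality using the hypothesis that the eigenvalue is nonzero. This is a short algebraic argument with essentially no technical difficulty.

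First I would record the hypothesis explicitly: $\bx \neq 0$ satisfies $\bA\bB\bx = \lambda\bx$ for some $\lambda \neq 0$. Setting $\by = \bB\bx$, I would left-multiply the eigenvector equation by $\bB$ to obtain $\bB\bA\bB\bx = \lambda\bB\bx$, which by associativity rearranges to $(\bB\bA)\by = \lambda\by$. This already shows that $\by$ is an eigenvector of $\bB\bA$ for the very same eigenvalue $\lambda$, so the only thing left to establish is that $\by$ is non-trivial, i.e. $\by \neq 0$.

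This last point is exactly where the nonzero-eigenvalue assumption is needed. If instead $\by = \bB\bx = 0$, then $\bA\bB\bx = \bA\by = 0$, while the defining equation gives $\bA\bB\bx = \lambda\bx$; hence $\lambda\bx = 0$, and since $\lambda \neq 0$ this would force $\bx = 0$, contradicting the assumption that $\bx$ is a non-trivial eigenvector. Therefore $\by \neq 0$, and $\by = \bB\bx$ is a non-trivial eigenvector of $\bB\bA$ with eigenvalue $\lambda$. I do not anticipate any genuine obstacle here: the whole content is the associativity identity together with this brief nontriviality check, the role of the nonzero eigenvalue being precisely to prevent $\by$ from collapsing to the zero vector.
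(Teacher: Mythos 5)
Your proof is correct and complete: left-multiplying $\bA\bB\bx=\lambda\bx$ by $\bB$ gives $(\bB\bA)(\bB\bx)=\lambda(\bB\bx)$, and the nonzero eigenvalue is used exactly where it must be, to rule out $\bB\bx=0$ (since $\bB\bx=0$ would force $\lambda\bx=\bA\bB\bx=0$ and hence $\bx=0$). Note that the paper itself supplies no proof of this proposition---it is quoted directly from \cite{mardia} (Theorem A.6.2)---so there is nothing to compare against; what you have written is the standard argument for this classical fact, and it fills in the cited result faithfully.
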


\begin{proof}[Proof of Lemma~\ref{lemma:fisher}]
 Set $\tilde{\bdelta}=(0_p,\bdelta)$ and $\bdelta_0=(\bmu_1-\bar\bmu,\ldots,\bmu_{K}-\bar\bmu)$. Note that $\bdelta \mathrm{1}_K=\sum_{k=2}^{K}\bmu_k-(K-1)\bmu_1=K(\bar\bmu-\bmu_1)$. Therefore, $\bdelta_0=\tilde \bdelta-\frac{1}{K}\tilde\bdelta\mathrm{1}_{K}\mathrm{1}_K^\T=\tilde\bdelta(\bI_K-\frac{1}{K}1_K1_K^\T)=\tilde\bdelta\bPi$. 

Then, since $\btheta_0=\bSigma^{-1}\tilde\bdelta$, we have $\btheta_0\bPi=\bSigma^{-1}\bdelta_0$ and $\btheta_0\bPi\bdelta_0^\T=\bSigma^{-1}\bdelta_0\bdelta_0^\T$. By Proposition~\ref{prop:fisher}, we have the desired conclusion.
\end{proof}

\bibliographystyle{agsm}
\bibliography{ref}

\end{document}